\RequirePackage{mathtools}

\documentclass[envcountsame]{llncs}

\usepackage[T1]{fontenc}
\usepackage{microtype}
\usepackage[misc,geometry]{ifsym} 

\usepackage{amssymb}
\let\oldendproof\endproof
\renewcommand\endproof{~\hfill$\qed$\oldendproof}

\usepackage{subcaption}
\usepackage{complexity}
\usepackage{xspace}
\usepackage{csquotes}
\usepackage{thm-restate}

\usepackage[svgnames]{xcolor}
\usepackage{hyperref}
\hypersetup{
    breaklinks=true,
    colorlinks=true,
    linkcolor=DarkBlue,
    urlcolor=MediumVioletRed,
    citecolor=Green
}

\captionsetup{style=base}
\DeclareCaptionFormat{myformat}{{\bfseries #1#2}#3}
\captionsetup{format=myformat,labelsep=period}


\renewcommand{\R}{\ensuremath{\mathbb{R}}\xspace}
\renewcommand{\H}{\ensuremath{\mathbb{H}}\xspace}
\newcommand{\decproblemname}[1]{{\normalfont\textsc{#1}}\xspace}
\newcommand{\ETR}{\decproblemname{ETR}}
\newcommand{\Stretchability}{\decproblemname{Stretchability}}
\newcommand{\SimpleStretchability}{\decproblemname{SimpleStretchability}}
\newcommand{\ER}{\ensuremath{\exists\R}\xspace}
\newcommand{\calA}{\mathcal{A}\xspace}
\newcommand{\calD}{\mathcal{D}\xspace}
\newcommand{\calR}{\mathcal{R}\xspace}
\newcommand{\Poincare}{Poincar{\'{e}}\xspace}
\newcommand{\eps}{\varepsilon}
\newcommand{\abs}[1]{\lvert #1 \rvert}

\spnewtheorem{openproblem}{Open Problem}{\bfseries}{\itshape}

\hyphenation{ar-range-ment}

\begin{document}

\title{On the Complexity of Lombardi Graph Drawing}

\author{Paul Jungeblut\inst{1}\orcidID{0000-0001-8241-2102} (\Letter)}
\authorrunning{P. Jungeblut}
\institute{Karlsruhe Institute of Technology, Karlsruhe, Germany, \email{paul.jungeblut@kit.edu}}

\maketitle

\begin{abstract}
    In a \emph{Lombardi drawing} of a graph the vertices are drawn as points and the edges are drawn as circular arcs connecting their respective endpoints.
    Additionally, all vertices have perfect angular resolution, i.e., all angles incident to a vertex~$v$ have size~$2\pi/\deg(v)$.
    We prove that it is \ER-complete to determine whether a given graph admits a Lombardi drawing respecting a fixed cyclic ordering of the incident edges around each vertex.
    In particular, this implies \NP-hardness.
    While most previous work studied the (non-)existence of Lombardi drawings for different graph classes, our result is the first on the computational complexity of finding Lombardi drawings of general graphs.

    \keywords{Graph drawing \and Lombardi drawing \and Existential theory of the reals}
\end{abstract}

\section{Introduction}

Inspired by the work of American artist Mark Lombardi~\cite{Hobbs2003_GlobalNetworks}, a \emph{Lombardi drawing} of a given graph~$G$ draws vertices as points and edges as circular arcs or line segments connecting their endpoints.
Further, each vertex~$v$ has \emph{perfect angular resolution}, meaning that all angles between edges incident to~$v$ have size~$2\pi/\deg(v)$.
Notably, planarity is not required (even for planar graphs) and the crossing angle at intersections may be arbitrary.
See Figure~\ref{fig:lombardi_examples} for Lombardi drawings of three well-known graphs.

\begin{figure}
    \begin{subfigure}[t]{0.3\textwidth}
        \centering
        \includegraphics[page=1,scale=0.7]{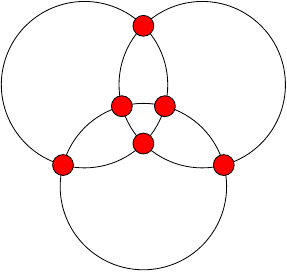}
        \caption{Octahedron Graph}
    \end{subfigure}
    \hfill
    \begin{subfigure}[t]{0.3\textwidth}
        \centering
        \includegraphics[page=2,scale=0.7]{figures/lombardi-examples.pdf}
        \caption{Petersen Graph}
    \end{subfigure}
    \hfill
    \begin{subfigure}[t]{0.35\textwidth}
        \centering
        \includegraphics[page=3,scale=0.7]{figures/lombardi-examples.pdf}
        \caption{Gr{\"{o}}tzsch Graph}
    \end{subfigure}
    \caption{Lombardi drawings created with the \emph{Lombardi Spirograph} from~\cite{Duncan2012_Lombardi}.}
    \label{fig:lombardi_examples}
\end{figure}

Introduced by Duncan, Eppstein, Goodrich, Kobourov and N{\"{o}}llenburg over ten years ago~\cite{Duncan2012_Lombardi}, Lombardi drawings have received a lot of attention in the graph drawing community, see the related work in Section~\ref{sec:related_work} below.
While most literature focuses on the construction of Lombardi drawings for different graph classes, the computational complexity to decide whether a Lombardi drawing exists remains largely unknown.
To the best of our knowledge, \NP-completeness is only known for certain regular graphs under the additional requirement that all vertices must lie on a common circle~\cite{Duncan2012_Lombardi}.
No lower or upper bounds on the complexity for general graphs are known (allowing arbitrary vertex placement).

In this paper we consider the case that the graph~$G$ comes with a fixed \emph{rotation system}~$\calR$, i.e., a cyclic ordering of the incident edges around each vertex.
Our main result is to determine the exact computational complexity of deciding whether~$G$ admits a Lombardi drawing respecting~$\calR$:

\begin{theorem}
    \label{thm:lombardi_er_complete}
    Given a graph~$G$ with a rotation system~$\calR$, it is \ER-complete to decide whether~$G$ admits a Lombardi drawing respecting~$\calR$.
\end{theorem}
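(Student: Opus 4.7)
I would encode the existence of a Lombardi drawing respecting $\calR$ as an \ETR{} sentence. For every vertex $v$ of degree $d$, introduce real variables $x_v, y_v$ for the position of $v$ and $c_v, s_v$ subject to $c_v^2 + s_v^2 = 1$ specifying the direction of a chosen reference half-edge at $v$. The tangent direction of the $i$-th half-edge at $v$ (in the order prescribed by $\calR$) is then obtained from $(c_v, s_v)$ by the fixed rotation through angle $2\pi i / d$; the entries of this rotation matrix are algebraic over $\mathbb{Q}$ and can be represented in \ETR{} via auxiliary variables pinned down by their minimal polynomials together with appropriate sign conditions. For each edge $e = uv$ with prescribed tangent vectors $\tau^e_u, \tau^e_v$ pointing from $u$, respectively $v$, into the edge, the existence of a circular arc or line segment from $u$ to $v$ realising those tangents is equivalent to the polynomial equation $\langle \tau^e_u + \tau^e_v,\, v - u \rangle = 0$, which says that the reflection of $\tau^e_u$ through the perpendicular bisector of $uv$ equals $-\tau^e_v$. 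Conjoining these equations over all edges, together with pairwise distinctness of vertex positions, yields an \ETR{} formula satisfiable iff $G$ admits a Lombardi drawing respecting $\calR$.

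\textbf{\ER-hardness.} For the lower bound I would reduce from a restricted but still \ER-complete variant of \ETR{} whose atomic constraints are algebraically simple, such as \textsc{Etr-Inv}, in which the variables are confined to a fixed bounded real interval and the constraints are additions $x + y = z$ and inversions $x \cdot y = 1$. Given an instance $\varphi$, I would construct a graph $G_\varphi$ with rotation system $\calR_\varphi$ by gluing together small gadget subgraphs of three kinds: variable gadgets that each encode one variable of $\varphi$ as a geometric parameter of the drawing (for instance a signed distance between two designated vertices, or an angular offset at a distinguished vertex); addition gadgets that enforce $x + y = z$ between the encoded values; and inversion gadgets that enforce $xy = 1$. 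With the rotation system fixed, each vertex only contributes the four real parameters $(x_v, y_v, c_v, s_v)$ up to $c_v^2 + s_v^2 = 1$, and every incident edge imposes the polynomial reflection equation above, so each additional edge between two gadgets creates a nontrivial algebraic relation among their boundary parameters.

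\textbf{Main obstacle.} The hard part will be the design and correctness of the arithmetic gadgets, especially the converse direction: showing that every Lombardi drawing of $(G_\varphi, \calR_\varphi)$ induces values satisfying $\varphi$. Because crossings are allowed and the shape of each circular arc depends nonlinearly on its endpoints and tangents, one must rule out unintended global configurations in which distant gadgets interact unexpectedly. A standard remedy is to pin down the ``outer'' geometry of each gadget so tightly that its internal degrees of freedom reduce to exactly the encoded real variables; verifying that this rigidity really propagates through the whole composition, and that the encoding of addition and inversion survives the nonlinearities of circular-arc constraints, is where I expect the technical bulk of the proof to lie.
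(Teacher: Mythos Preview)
Your membership argument is in the right spirit and differs from the paper only in style: the paper invokes the real-RAM verification criterion of Erickson, van der Hoog and Miltzow, whereas you write an \ETR{} formula directly. Two small technical points are worth noting. First, the single equation $\langle \tau^e_u + \tau^e_v,\, v - u \rangle = 0$ is necessary but not sufficient: with unit-length tangents it still admits the spurious solution in which the perpendicular components of $\tau^e_u$ and $\tau^e_v$ have opposite sign, so you need one more constraint (and your verbal description of the reflection has a sign slip---the reflection of $\tau^e_u$ equals $\tau^e_v$, not $-\tau^e_v$). Second, you should also encode the non-degeneracy conditions the paper checks (no vertex lies on a foreign edge, no two edges share more than one point). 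Both are easy to repair.

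The hardness part, however, is not a proof but a plan, and you say so yourself: the entire content of the reduction is the construction and correctness of the addition and inversion gadgets, and you have supplied neither. Building such gadgets for Lombardi drawings is genuinely nontrivial, because the only local constraint at a vertex is perfect angular resolution, which fixes \emph{relative} tangent directions but leaves the absolute rotation free; it is far from clear how to make a small subgraph whose Lombardi realisations encode a single real number in a way that composes cleanly under $x+y=z$ and $xy=1$. Until those gadgets exist and are analysed, there is no reduction.

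The paper sidesteps this difficulty entirely by reducing from \SimpleStretchability rather than from an arithmetic fragment of \ETR{}. The key geometric idea is that hyperbolic lines in the \Poincare{} disk are circular arcs orthogonal to the boundary circle, and that simple stretchability is the same in~$\R^2$ and in~$\H^2$. From a simple pseudoline arrangement~$\calD$ one builds a graph whose intended Lombardi drawing places one cycle on the \Poincare{} boundary circle and one cycle per pseudoline on the corresponding orthogonal circle; a small ``circle gadget'' (based on a characterisation of simple arc-triangles with prescribed interior angles) forces every Lombardi drawing to put these cycles on actual circles, after which orthogonality at the degree-$4$ vertices turns the drawing back into a hyperbolic line arrangement realising~$\calD$. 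This approach trades your hypothetical arithmetic gadgets for one concrete and reusable rigidity gadget (forcing a cycle onto a circle), and the global correctness then follows directly from stretchability rather than from a delicate composition argument.
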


The complexity class \ER contains all problems that can be reduced to solving a system of polynomial equations and inequalities, see Section~\ref{sec:etr} for a formal definition.
Since~$\NP \subseteq \ER$, our result also implies \NP-hardness.

\medskip
Previous work frequently utilizes hyperbolic geometry to construct Lombardi drawings~\cite{Duncan2012_Lombardi,Eppstein2014_SoapBubbles,Eppstein2021_HyperbolicGD}, the reason being that straight line segments in the hyperbolic plane~$\H^2$ can be visualized by circular arcs in the Euclidean plane~$\R^2$ (with the same crossing angles).
We take a similar approach:
A key ingredient of our \ER-hardness reduction is a recent observation by Bieker, Bl{\"{a}}sius, Dohse and Jungeblut~\cite{Bieker2023_HyperbolicER} stating that a simple pseudoline arrangements is stretchable in the Euclidean plane~$\R^2$ if and only if it is stretchable in the hyperbolic plane~$\H^2$ (see Sections~\ref{sec:hyperbolic} and~\ref{sec:stretchability} for the necessary definitions).
Their result allows us on the one hand to construct Lombardi drawings from hyperbolic line arrangements, and on the other hand to prove that sometimes no Lombardi drawing can exist.

\subsection{Related Work}
\label{sec:related_work}

Lombardi drawings were introduced by Duncan, Eppstein, Goodrich, Kobourov and N{\"{o}}llenburg~\cite{Duncan2012_Lombardi}, motivated by the network visualizations of Mark Lombardi~\cite{Hobbs2003_GlobalNetworks}.
While not all graphs admit Lombardi drawings (with or without prescribing the rotation system)~\cite{Duncan2018_PlanarPolyArcLombardi,Duncan2012_Lombardi}, many graph classes always admit Lombardi drawings.
Among them are $2$-degenerate (and some $3$-degenerate) graphs~\cite{Duncan2012_Lombardi}, subclasses of $4$-regular graphs~\cite{Duncan2012_Lombardi,Kindermann2019_Knots} and many classes of planar graphs that even admit planar Lombardi drawings.
These include trees~\cite{Duncan2013_Trees}, cactus graphs~\cite{Eppstein2023_Angles}, Halin graphs~\cite{Duncan2012_Lombardi,Eppstein2016_HalinRecognition}, subcubic graphs~\cite{Eppstein2014_SoapBubbles} and outerpaths~\cite{Duncan2018_PlanarPolyArcLombardi}.
However, many planar graphs do not admit planar Lombardi drawings in general~\cite{Duncan2018_PlanarPolyArcLombardi,Duncan2012_Lombardi,Eppstein2014_SoapBubbles,Eppstein2021_BipartiteSP,Kindermann2019_Knots}.

A user study confirmed that Lombardi drawings are considered more aesthetic than straight line drawings but do not increase the readability~\cite{Purchase2013_UserStudy}.

Many variants have been considered:
In a \emph{$k$-circular Lombardi drawing} all vertices lie on one of~$k$ concentric circles~\cite{Duncan2012_Lombardi}.
Slightly relaxing the perfect angular resolution condition leads to \emph{near Lombardi drawings}~\cite{Chernobelskiy2012_ForceLombardi,Kindermann2019_Knots}.
Lastly, edges in \emph{$k$\nobreakdash-Lombardi drawings} are drawn as the concatenation of up to~$k$ circular arcs~\cite{Duncan2018_PlanarPolyArcLombardi,Kindermann2019_Knots}.

Not much is known regarding the computational complexity of deciding whether a given graph admits a Lombardi drawing.
Proving that a graph class always admits a Lombardi drawing is usually done constructively (this is the case for all classes mentioned above).
In fact, these proofs lead to efficient algorithms (at least in a real RAM model of computation where square roots can be computed exactly).
On the other hand, for $d$-regular graphs with~$d \equiv 2 \mod 4$ it is \NP-complete to decide whether they have a $1$-circular Lombardi drawing~\cite{Duncan2012_Lombardi}.
Containment in \NP~might be surprising as this is in contrast to our main result showing \ER-hardness for general graphs and \enquote{classical} Lombardi drawings.
\NP-membership follows, because those graphs are yes-instances if and only if they are Hamiltonian.

\section{Preliminaries}

Let us recall the necessary geometric foundation for our reduction.

\subsection{Hyperbolic Geometry}
\label{sec:hyperbolic}

The hyperbolic plane~$\H^2$ is an example of a non-Euclidean geometry.
In many ways it behaves similar to the Euclidean plane~$\R^2$, e.g.\ two points define a unique line and we can measure distances and angles.

Formally, both~$\R^2$ and~$\H^2$ can be described by an axiomatic system (like the one from Hilbert for~$\R^2$~\cite{Hilbert1968_GrundlagenGeometrie}).
In fact, axiomatic systems for~$\R^2$ and~$\H^2$ are nearly identical, explaining the many similarities between~$\R^2$ and~$\H^2$.
Without going into the technical details, Hilbert's axiomatic system contains the so-called \emph{parallel postulate} stating that for any line~$\ell$ and point~$p$ not on~$\ell$ in~$\R^2$ there is at most one (indeed exactly one) line through~$p$ parallel to~$\ell$.
Negating this axiom turns Hilbert's axiomatic system for~$\R^2$ into one that defines~$\H^2$.

When working with the hyperbolic plane~$\H^2$ we usually avoid working with the axioms directly.
Instead we consider so-called \emph{models}, i.e., embeddings of~$\H^2$ into (in our case)~$\R^2$.
Several of these models are used in the literature.
Important for us is the \emph{\Poincare disk model}, see Figure~\ref{fig:models}, where~$\H^2$ is mapped to the interior of a unit disk~$D$ called the \emph{\Poincare disk}.
We omit how~$\H^2$ is mapped into~$D$ and instead focus on some useful properties:
\begin{itemize}
    \item Hyperbolic lines are mapped to either circular arcs orthogonal to~$D$ or diameters of~$D$.
    By a slight perturbation it is actually always possible to obtain a realization in the \Poincare disk in which each hyperbolic line is represented by a circular arc.
    \item The \Poincare disk model is \emph{conformal}, meaning that the angles in the hyperbolic plane equal the angles in a drawing inside the \Poincare disk~$D$.
    Conformality is crucial in our reduction to obtain perfect angular resolution.
\end{itemize}

\addtocounter{figure}{-1}
\begin{figure}[tb]
    \begin{minipage}[t]{0.58\textwidth}
        \begin{subfigure}[t]{\textwidth}
            \centering
            \includegraphics[page=2]{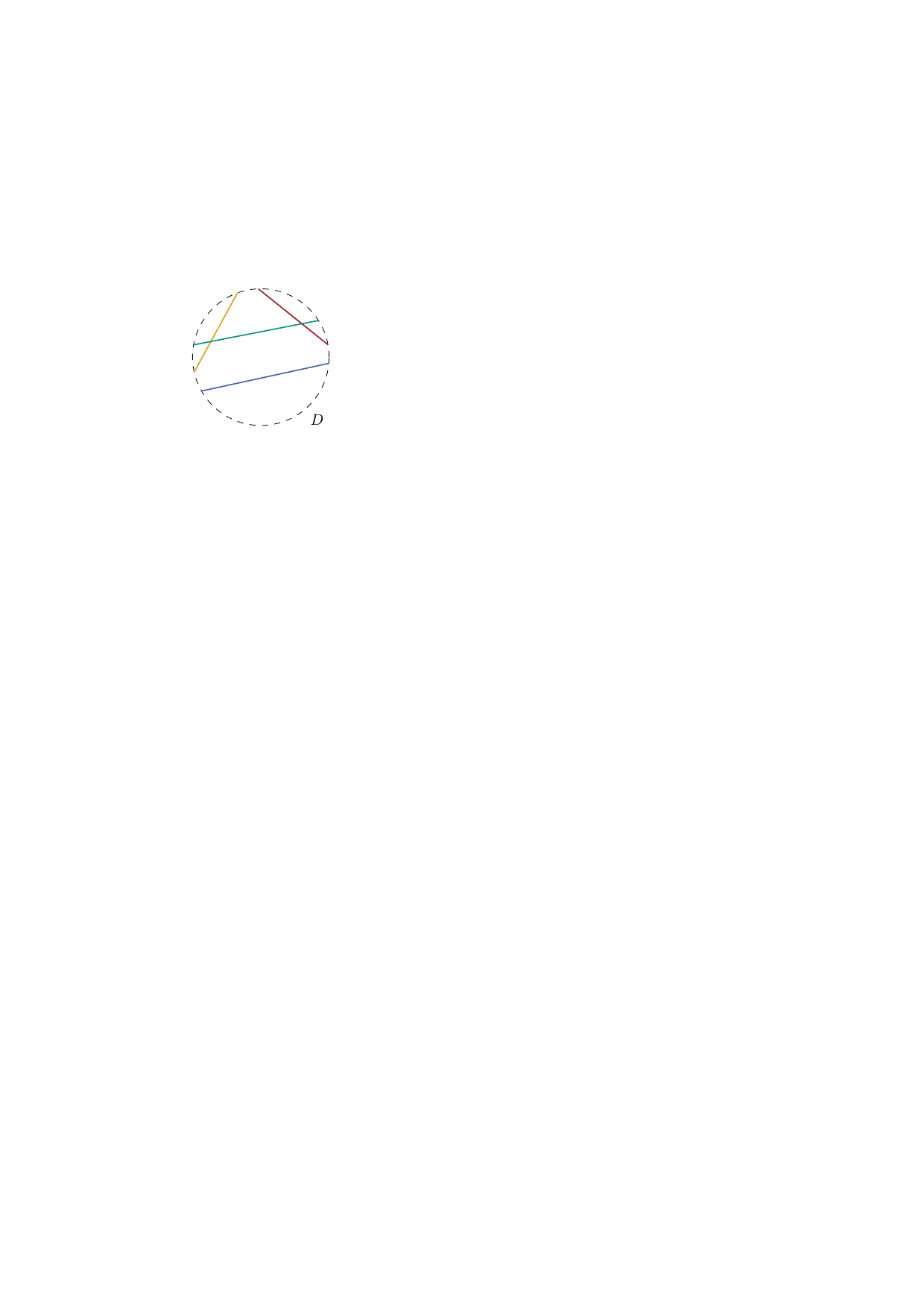}
        \end{subfigure}
        \captionof{figure}{Hyperbolic lines in the \Poincare disk~$D$.}
        \label{fig:models}
    \end{minipage}
    \hfill
    \begin{minipage}[t]{0.38\textwidth}
        \begin{subfigure}[t]{0.98\textwidth}
            \centering
            \includegraphics[page=1]{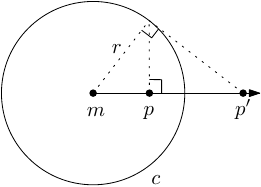}
        \end{subfigure}
        \captionof{figure}{
            Circle inversion.
        }
        \label{fig:circle_inversion}
    \end{minipage}
\end{figure}

\subsection{Complexity Class \ER}
\label{sec:etr}

Intuitively, the complexity class \ER contains all problems that can be formulated as a system of polynomial equations and inequalities.
Formally, it is defined to contain all problems that polynomial-time many-one reduce to the decision problem \ETR (short for \enquote{existential theory of the reals}) which is defined as follows:
The input of \ETR is a well-formed sentence~$\Phi$ in the existential fragment of the first-order theory of the reals, i.e., a sentence of the form
\[
    \Phi~\equiv~
    \exists X_1, \ldots, X_n \in \R :
    \varphi(X_1, \ldots, X_n)
    \text{,}
\]
where~$\varphi$ is a quantifier-free formula consisting of polynomial equations and inequalities with integer coefficients.
The task is to decide whether~$\Phi$ is true.
For example, $\exists X, Y \in \R : XY - 2X = 1 \land X + Y = 4$ is a yes-instance of \ETR because for~$(X,Y) = (1,3)$ both polynomial equations are satisfied.
On the other hand, $\exists X \in \R : X^2 < 0$ is a no-instance:
There is no real number~$X$ whose square is negative.
It is known that~$\NP \subseteq \ER \subseteq \PSPACE$ and both inclusions are conjectured to be strict~\cite{Canny1988_PSPACE,Schaefer2017_FixedPointsNash,Shor1991_Stretchability}.

Many problems from computational geometry and especially graph drawing have been shown to be \ER-complete.
Examples include RAC-drawability~\cite{Schaefer2021_RAC}, geometric $k$-planarity~\cite{Schaefer2021_KPlanarityFixed}, the recognition of many types of geometric intersection graphs~\cite{Bieker2023_HyperbolicER,Kratochvil1994_SegmentIntersectionGraphs,Schaefer2010_GeometryTopology}, simultaneous graph embedding~\cite{Cardinal2015_SimultaneousEmbedding,Kyncl2011_SimultaneousEmbedding,Schaefer2021_FixedK}, variants of the segment number~\cite{Okamoto2019_SegmentNumber}, as well as extending a partial planar straight line drawing of a planar graph inside a polygonal region to a drawing of the full graph~\cite{Lubiw2022_DrawingInPolygonialRegion}.

\subsection{Stretchability of Pseudolines}
\label{sec:stretchability}

A \emph{pseudoline arrangement}~$\calA = \{\ell_1, \ldots, \ell_n\}$ is a set of $x$-monotone curves in~$\R^2$ such that each pair of curves intersects at most once.
We say that~$\calA$ is \emph{simple} if each pair intersects exactly once and no three pseudolines in~$\calA$ intersect in a point.
See Figure~\ref{fig:stretchability_arrangement} for a simple pseudoline arrangement.

We always assume that the pseudolines are labeled such that a vertical line to the left of all intersections crosses~$\ell_i$ below~$\ell_j$ (for~$i, j \in \{1, \ldots, n\}$) if and only if~$i < j$.
There are several equivalent ways to describe the intersection pattern of a given pseudoline arrangement.
For us, a \emph{combinatorial description}~$\calD$ of~$\calA$ is a list of~$n$ lists, one for each pseudoline, listing the order of intersections along it from left to right.
For example, the list of intersections for~$\ell_1$ in Figure~\ref{fig:stretchability_arrangement} contains (in this order) $\ell_3$, $\ell_4$ and~$\ell_2$.

\begin{figure}[tb]
    \centering
    \begin{subfigure}[t]{0.3\textwidth}
        \centering
        \includegraphics[page=1]{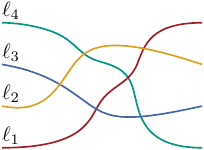}
        \caption{A simple arrangement of four pseudolines.}
        \label{fig:stretchability_arrangement}
    \end{subfigure}
    \hfill
    \begin{subfigure}[t]{0.29\textwidth}
        \centering
        \includegraphics[page=2]{figures/hyperbolic-stretchability.pdf}
        \caption{A corresponding line arrangement in $\R^2$.}
        \label{fig:stretchability_euclidean}
    \end{subfigure}
    \hfill
    \begin{subfigure}[t]{0.32\textwidth}
        \centering
        \includegraphics[page=3]{figures/hyperbolic-stretchability.pdf}
        \caption{A corresponding line arrangement in~$\H^2$ in the \Poincare disk~$D$.}
        \label{fig:stretchability_hyperbolic}
    \end{subfigure}
    \caption{Stretchability of a simple pseudoline arrangement in~$\R^2$ and~$\H^2$.}
    \label{fig:stretchability}
\end{figure}

We say that a pseudoline arrangement~$\calA$ is \emph{stretchable (in~$\R^2$)} if there is a line arrangement that is homeomorphic to~$\calA$, i.e., having exactly the same intersection pattern.
Figure~\ref{fig:stretchability_euclidean} shows one way to stretch the pseudolines from Figure~\ref{fig:stretchability_arrangement}.
Given a combinatorial description~$\calD$ of a (simple) pseudoline arrangement~$\calA$, we denote by (\textsc{Simple})\Stretchability the decision problem whether~$\calA$ is stretchable.
It is well-known that both problems are \ER-complete~\cite{Mnev1988_UniversalityTheorem,RichterGebert2002_UniversalityTheorem,Shor1991_Stretchability} and many \ER-hardness results are by a reduction from one of them~\cite{Bienstock1991_CrossingNumber,Hoffmann2017_PlanarSlopeNumber,Kratochvil1994_SegmentIntersectionGraphs,Schaefer2010_GeometryTopology,Schaefer2013_GraphRealizability}.

Instead of asking for stretchability in the Euclidean plane~$\R^2$, one may also consider stretchability in the hyperbolic plane~$\H^2$.
For example, Figure~\ref{fig:stretchability_hyperbolic} shows a hyperbolic line arrangement in the \Poincare disk model with the same intersection pattern as the pseudolines from Figure~\ref{fig:stretchability_arrangement}.
Recently, Bieker, Bl{\"{a}}sius, Dohse and Jungeblut observed that being a yes- or no-instance of \SimpleStretchability is independent of the underlying plane being~$\R^2$ or~$\H^2$~\cite{Bieker2023_HyperbolicER}, thereby allowing us to use the term \enquote{stretchable} without specifying whether we consider~$\R^2$ or~$\H^2$:

\begin{theorem}[\cite{Bieker2023_HyperbolicER}]
    \label{thm:hyperbolic_simple_stretchability}
    Let~$\mathcal{D}$ be the combinatorial description of a simple pseudoline arrangement.
    Then~$\mathcal{D}$ is stretchable in~$\R^2$ if and only if~$\mathcal{D}$ is stretchable in~$\H^2$.
\end{theorem}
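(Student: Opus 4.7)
The plan is to exploit the Beltrami-Klein model of $\H^2$, in which hyperbolic lines are precisely the chords of the unit disk~$D$, so that they locally coincide with Euclidean line segments. Once this model is fixed, both directions become almost immediate, with \emph{simplicity} of the arrangement doing the crucial work in the harder direction.

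For the direction $\R^2 \Rightarrow \H^2$, suppose the combinatorial description $\calD$ is realized by a Euclidean line arrangement $\calA$. Since $\calA$ is simple, it has only finitely many crossing points. After a suitable rotation (to maintain $x$-monotonicity and the standard bottom-to-top labeling on a vertical line left of all intersections) and a uniform scaling about the origin, one may assume that all pairwise crossings lie strictly inside the open unit disk $D$. Interpreting $D$ as the Beltrami-Klein disk, the intersection of each rescaled Euclidean line with $D$ is a chord and therefore a hyperbolic line; since every crossing still lies inside $D$, the combinatorial description $\calD$ is preserved, so $\calD$ is stretchable in $\H^2$.

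For the direction $\H^2 \Rightarrow \R^2$, suppose $\calD$ is realized by a hyperbolic line arrangement in the Beltrami-Klein model. Each hyperbolic line corresponds to a chord of $D$ and hence lies on a unique Euclidean line; we extend each chord to that full Euclidean line. The essential point is that in a \emph{simple} arrangement every pair of hyperbolic lines meets exactly once and the meeting point is inside $D$. Since two distinct Euclidean lines meet in at most one point, no new crossings can appear outside $D$, and the left-to-right order of intersections along each line is unchanged. Hence the extended arrangement realizes $\calD$ in $\R^2$.

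The one place where simplicity is truly used is precisely this last step: without it, two non-crossing chords inside $D$ could meet once their extensions leave $D$, producing a spurious Euclidean crossing and destroying the combinatorial description. Because the theorem restricts to simple arrangements, this cannot happen, and the remaining technicalities (restoring $x$-monotonicity and the bottom-to-top labeling) are handled by the initial rotation, which does not affect the combinatorial description. The main obstacle I anticipate in writing the proof carefully is thus not a conceptual one but bookkeeping around these labeling conventions; the geometric core of the argument is simply that the Beltrami-Klein model identifies hyperbolic lines with Euclidean line segments, and simplicity forces all relevant intersections to coincide whether one reads the picture in $\R^2$ or in $\H^2$.
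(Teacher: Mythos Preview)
Your proposal is correct and matches the paper's own argument essentially verbatim: both directions are handled via the Beltrami-Klein model, scaling (or enclosing) the arrangement so all crossings lie inside the disk for $\R^2\Rightarrow\H^2$, and extending chords to full lines for $\H^2\Rightarrow\R^2$, with simplicity guaranteeing that no new crossings appear outside the disk. The only additions you make are the bookkeeping remarks about rotation and labeling, which the paper omits.
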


\subsection{Circle Geometry}

A \emph{circle inversion} with respect to a circle~$c$ with midpoint~$m$ and radius~$r$ swaps the interior and exterior of~$c$.
Each point~$p \in \R^2 \setminus \{m\}$ is mapped to another point~$p' \in \R^2 \setminus \{m\}$ such that both lie on the same ray originating from~$m$ and such that $\mathrm{d}(p, m) \cdot \mathrm{d}(p', m) = r^2$ (here~$\mathrm{d}(\cdot,\cdot)$ is the Euclidean distance).
By adding a single \emph{point at infinity} (denoted by~$\infty$) to~$\R^2$ we obtain the so-called \emph{extended plane}, allowing us to extend the definition of a circle inversion to~$\R \cup \{\infty\}$.
Now~$m$ is mapped to~$\infty$ and vice versa.
See Figure~\ref{fig:circle_inversion} for an example.

Circle inversions map circles and straight lines to other circles and straight lines.
Further, they are conformal, i.e., they preserve the angles between crossing lines and circles.
In particular, they map Lombardi drawings to other Lombardi drawings.
See~\cite{Schwerdtfeger1979_CircleGeometry} for a more thorough introduction.

\section{Complexity of Lombardi Drawability}

We prove that deciding whether a graph admits a Lombardi drawing respecting a fixed rotation system is \ER-complete by providing a polynomial-time many-one reduction from \SimpleStretchability.

Our reduction is split into two parts:
We start by transforming a combinatorial description~$\calD$ of a simple pseudoline arrangement into a graph~$G$ with rotation system~$\calR$.
The reduction is such that~$\calD$ is stretchable if and only if~$G$ admits a Lombardi drawing~$\Gamma$ respecting~$\calR$ under the additional restriction that certain cycles in~$G$ must be drawn as circles in~$\Gamma$.
Only then we extend our construction so to enforce the additional restrictions \enquote{automatically} in each Lombardi drawing.

\subsection{Restricted Lombardi Drawings}

The following construction is illustrated in Figure~\ref{fig:restricted_construction}.
Let~$\calD$ be a combinatorial description of a simple arrangement~$\calA = \{\ell_1, \ldots, \ell_n\}$ of~$n \geq 2$ pseudolines, i.e., an instance of the \ER-complete \SimpleStretchability problem.
To recall, this means that for each pseudoline~$\ell_i \in \calA$ we have an ordered list containing the intersections with other pseudolines.
As~$\calA$ is simple, for each pseudoline this list contains exactly~$n - 1$ intersections, one for each other pseudoline.

The first step of the construction is to extend the pseudoline arrangement~$\calA$ by a simple closed curve~$\gamma$ intersecting every pseudoline in~$\calD$ exactly twice, such that~$\gamma$ contains all intersections of~$\calA$ in its interior, see Figure~\ref{fig:restricted_construction_arrangement}.
Let us denote the resulting arrangement by~$\calA_\gamma$.
The combinatorial description~$\calD_\gamma$ of~$\calA_\gamma$ can be obtained by adding for each pseudoline~$\ell_i \in \calA$ (for~$i \in \{1, \ldots, n\}$) one intersection with~$\gamma$ to the beginning and to the end of its list of intersections.
Further, $\calD_\gamma$ contains a list of intersections for~$\gamma$ containing every pseudoline in~$\calA$ exactly twice and whose cyclic ordering is~$\ell_1, \ldots, \ell_n, \ell_1, \ldots, \ell_n$.

Now let~$G_\gamma$ be the following graph:
We start by adding two vertices~$v_i^l$ and~$v_i^r$ per pseudoline~$\ell_i$ corresponding to the left- and rightmost intersections of~$\ell_i$ (these are the ones with~$\gamma$).
These~$2n$ vertices are then connected to a cycle in the order they appear on~$\gamma$.
Next, we connect each pair~$v_i^l$ and~$v_i^r$ by an edge~$e_i$.
The rotation system~$\calR_\gamma$ shall be such that all edges~$e_i$ are on the same side of~$C_\gamma$.
Now for each pseudoline~$\ell_i$ we add a path~$P_i$ from~$v_i^l$ to~$v_i^r$ by iterating through the list of its intersections from left to right.
For each intersection with another pseudoline~$\ell_j$ we add (in this order) two new vertices~$v_{i,j}^l$ and~$v_{i,j}^r$ to the path.
In~$\calR_\gamma$, path~$P_i$ and edge~$e_i$ should be on opposite sides of ~$C_\gamma$.
Let us denote by~$C_i$ the cycle formed by concatenating~$P_i$ with~$e_i$.
Lastly, for each intersection of two pseudolines~$\ell_i$ and~$\ell_j$ with~$i < j$ we connect (in this order)~$v_{i,j}^l$, $v_{j,i}^l$, $v_{i,j}^r$ and~$v_{j,i}^r$ into a $4$-cycle~$C_{i,j}$.
In~$\calR_\gamma$ the circular ordering around each of the four vertices should contain alternately an edge of~$C_{i,j}$ and an edge of~$C_i$ respectively~$C_j$.
See Figure~\ref{fig:restricted_construction_graph} for the complete construction.

\begin{figure}[tb]
    \centering
    \begin{subfigure}[t]{0.45\textwidth}
        \centering
        \includegraphics[page=1]{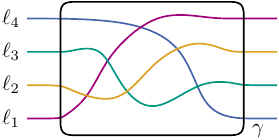}
        \caption{An arrangement of four pseudolines enclosed by a curve~$\gamma$.}
        \label{fig:restricted_construction_arrangement}
    \end{subfigure}
    \hfill
    \begin{subfigure}[t]{0.5\textwidth}
        \centering
        \includegraphics[page=2]{figures/construction.pdf}
        \caption{Graph~$G_\gamma$ drawn such that it respects rotation system~$\calR_\gamma$.}
        \label{fig:restricted_construction_graph}
    \end{subfigure}
    \caption{Example construction of~$G_\gamma$ and~$\calR_\gamma$ from a pseudoline arrangement~$\calA$.}
    \label{fig:restricted_construction}
\end{figure}

\medskip
In the two lemmas below we restrict ourselves to drawings of~$G_\gamma$ in which some cycles must be drawn as circles.
A cycle~$C$ is said to be \emph{drawn as a circle}~$c$ if all vertices and edges of~$C$ lie on~$c$ and the drawing is non-degenerate\footnote{
    A drawing is \emph{degenerate} if two vertices are drawn at the same point or a vertex is drawn in the interior of an edge.
}.
In particular this fixes the ordering of the vertices and edges of~$C$ along~$c$ (the only degree of freedom is whether this ordering is clockwise or counterclockwise).

\begin{lemma}
    \label{lem:stretchable_to_restricted_lombardi}
    If~$\calD$ is stretchable, then~$G_\gamma$ has a Lombardi drawing~$\Gamma$ respecting~$\calR_\gamma$ such that the cycles~$C_\gamma$, all~$C_i$ (for $i \in \{1, \ldots, n\}$) and all~$C_{i,j}$ (for~$i, j \in \{1, \ldots, n\}$ with~$i < j$) are drawn as circles in~$\Gamma$.
\end{lemma}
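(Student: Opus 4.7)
The plan is to use Theorem~\ref{thm:hyperbolic_simple_stretchability} together with the conformality of the \Poincare disk model to convert a hyperbolic line arrangement realizing~$\calD$ directly into the drawing~$\Gamma$. Two orthogonality facts do all the work: in the \Poincare disk, (i) every hyperbolic line is a circular arc meeting the boundary of the disk at a right angle, and (ii) the hyperbolic circle around a point~$p$ is a Euclidean circle meeting every hyperbolic line through~$p$ at a right angle. Combined, they will force perfect angular resolution~$\pi/2$ at each of the (degree-four) vertices of~$G_\gamma$.

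Concretely, I fix a realization of~$\calD$ in a \Poincare disk~$D$ (using the footnote observation to avoid diameters). The boundary~$\partial D$ will carry the cycle~$C_\gamma$, with~$v_i^l, v_i^r$ placed at the two points where the Euclidean circle underlying the arc~$\ell_i$ meets~$\partial D$, and the edges of~$C_\gamma$ drawn as the subarcs of~$\partial D$ between consecutive such vertices. The edge~$e_i$ is drawn as the arc of that same underlying circle lying outside~$D$, so that the whole cycle~$C_i$ sits on one Euclidean circle. For every intersection~$p_{ij}$ of~$\ell_i$ and~$\ell_j$ I choose a hyperbolic circle~$c_{ij}$ around~$p_{ij}$ with radius small enough that the~$c_{ij}$ are pairwise disjoint and meet no pseudoline other than~$\ell_i$ and~$\ell_j$; the four vertices of~$C_{ij}$ are placed at the four intersection points of~$c_{ij}$ with~$\ell_i \cup \ell_j$, and~$C_{ij}$ itself is drawn on~$c_{ij}$. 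Each path~$P_i$ is drawn as the union of the subarcs of~$\ell_i$ connecting the placed vertices on~$\ell_i$ in order from~$v_i^l$ to~$v_i^r$.

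To conclude, I verify the Lombardi condition at every vertex. At~$v_i^l$ and~$v_i^r$ the two~$C_\gamma$ edges are tangent to~$\partial D$, while~$e_i$ and~$P_i$ are tangent to the Euclidean circle of~$\ell_i$, which by~(i) is orthogonal to~$\partial D$; since~$e_i$ lies outside~$D$ and~$P_i$ inside, the four edges split into two pairs of collinear tangents in perpendicular directions, alternating correctly with respect to~$\calR_\gamma$. At an interior vertex such as~$v_{i,j}^l$, the two~$P_i$ edges are consecutive subarcs of~$\ell_i$ and hence tangent to a common direction, while the two~$C_{ij}$ edges are subarcs of~$c_{ij}$ tangent to the perpendicular direction by~(ii), again giving alternating~$\pi/2$ angles. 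The main obstacle is purely book-keeping: one must check that the cyclic order of the~$v_i^l, v_i^r$ around~$\partial D$ matches the sequence~$\ell_1,\ldots,\ell_n,\ell_1,\ldots,\ell_n$ prescribed by~$\calR_\gamma$ (which follows from the standard labelling of the pseudolines by their leftmost~$y$-coordinate), and that the cyclic order in which~$c_{ij}$ meets~$\ell_i$ and~$\ell_j$ matches the cyclic order of~$C_{ij}$ (which follows from the local picture of two arcs crossing transversally at~$p_{ij}$). Once this is in place,~$\Gamma$ satisfies all of the stated properties.
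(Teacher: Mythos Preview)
Your proof is correct and follows the same overall strategy as the paper: realize~$\calD$ in the \Poincare disk, put~$C_\gamma$ on the boundary circle, put each~$C_i$ on the Euclidean circle underlying~$\ell_i$, and put each~$C_{ij}$ on a small circle around the crossing that is orthogonal to both~$\ell_i$ and~$\ell_j$.

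The one genuine difference is how you produce the circle~$c_{ij}$. The paper treats this as a separate Euclidean lemma (Lemma~\ref{lem:circle_around_intersection}), computing explicitly that for any two properly intersecting circular arcs there exists an arbitrarily small circle orthogonal to both and enclosing the intersection. You instead invoke the intrinsic fact that a hyperbolic circle centred at~$p_{ij}$ is (by conformality and the Gauss lemma) a Euclidean circle meeting every hyperbolic line through~$p_{ij}$ at a right angle. Your argument is cleaner and stays entirely inside the hyperbolic picture, at the cost of appealing to a standard but non-trivial fact about the \Poincare model; the paper's argument is self-contained but requires a page of Euclidean computation. Either route is fine.

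Your additional remarks on the cyclic order of the~$v_i^l, v_i^r$ along~$\partial D$ and on the local alternation of~$c_{ij}$ with~$\ell_i, \ell_j$ are points the paper leaves implicit; it is good that you flag them.
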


\begin{proof}
    By Theorem~\ref{thm:hyperbolic_simple_stretchability} we can obtain a hyperbolic line arrangement realizing~$\calD$ in the \Poincare disk model.
    Further, each pseudoline~$\ell_i$ is drawn as a circular arc~$a_i$ (with underlying circle~$c_i$) inside and orthogonal to the \Poincare disk.
    From this, we construct a Lombardi drawing~$\Gamma$ of~$G_\gamma$ respecting~$\calR_\gamma$.

    We denote by~$c_\gamma$ the circle representing the \Poincare disk and draw all vertices of cycle~$C_\gamma$ on it, such that~$v_i^l$ and~$v_i^r$ are placed at the left and right intersection of~$a_i$ with~$c_\gamma$.
    Next, we draw the edges~$e_i$ outside of~$c_\gamma$ on~$c_i \setminus a_i$ and the paths~$P_i$ inside~$c_\gamma$ on~$a_i$.
    As~$v_i^l$ and~$v_i^r$ have degree four and~$c_i$ is orthogonal to~$c_\gamma$, all vertices on~$C_\gamma$ have perfect angular resolution.

    Next, for each pair of intersecting pseudolines~$\ell_i$ and~$\ell_j$ (with $i < j$) we place the vertices~$v_{i,j}^l$ and~$v_{i,j}^r$ to the left, respectively to the right, of the intersections on~$a_i$ (and similar~$v_{j,i}^l$ and~$v_{j,i}^r$ on~$a_j$) such that they lie on a common circle~$c_{i,j}$ which is orthogonal to~$a_i$ and~$a_j$.
    Here, the orthogonality of~$c_{i,j}$ with~$a_i$ and~$a_j$ guarantees perfect angular resolution at the four involved vertices.
    (We prove in Lemma~\ref{lem:circle_around_intersection} in the appendix that such a circle indeed exists).
    Further, we can choose~$c_{i,j}$ small enough so that no two such circles intersect, touch or contain each other.

    The drawing is non-degenerate, respects~$\calR_\gamma$, has~$C_\gamma$, all~$C_i$ and all~$C_{i,j}$ drawn as circles and perfect angular resolution, i.e., it is a Lombardi drawing.
\end{proof}

See Figure~\ref{fig:restricted_lombardi_drawing} for a Lombardi drawing of the graph shown in Figure~\ref{fig:restricted_construction_graph} that is constructed as described in the proof of Lemma~\ref{lem:stretchable_to_restricted_lombardi}.

\begin{figure}[tb]
    \centering
    \includegraphics[page=3]{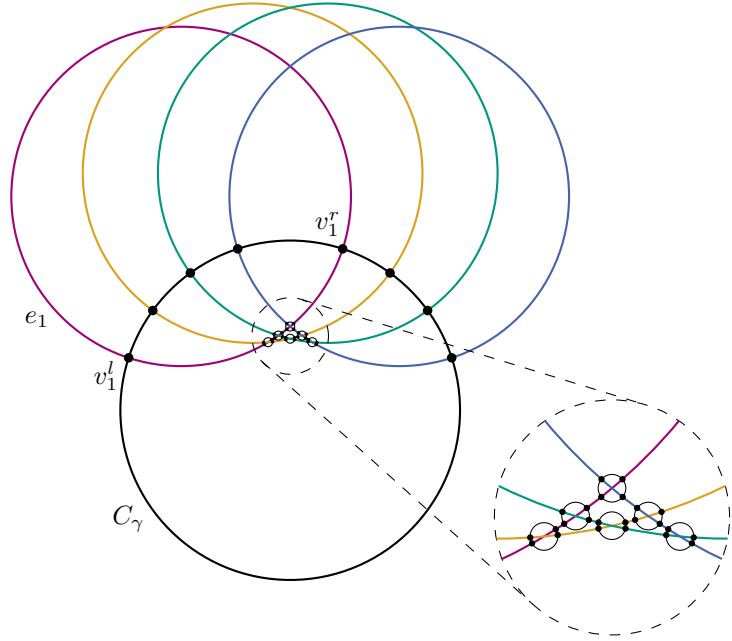}
    \caption{A Lombardi drawing of the graph constructed in Figure~\ref{fig:restricted_construction}.}
    \label{fig:restricted_lombardi_drawing}
\end{figure}

\begin{lemma}
    \label{lem:restricted_lombardi_to_stretchable}
    If~$G_\gamma$ has a Lombardi drawing~$\Gamma$ respecting~$\calR_\gamma$ such that~$C_\gamma$, all~$C_i$ (for $i \in \{1, \ldots, n\}$) and all~$C_{i,j}$ (for~$i, j \in \{1, \ldots, n\}$ with~$i < j$) are drawn as circles in~$\Gamma$, then~$\calD$ is stretchable.
\end{lemma}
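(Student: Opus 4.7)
The plan is to reverse-engineer a hyperbolic line arrangement from the Lombardi drawing $\Gamma$ and then invoke Theorem~\ref{thm:hyperbolic_simple_stretchability} to conclude that $\calD$ is stretchable in $\R$. Let $c_\gamma$, $c_i$ (for $i \in \{1, \ldots, n\}$) and $c_{ij}$ (for $i < j$) denote the circles onto which $C_\gamma$, $C_i$ and $C_{ij}$ are drawn, respectively.

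The first step is to exploit perfect angular resolution at the relevant degree-$4$ vertices. Each of $v_i^l, v_i^r$ has degree four, so every incident angle equals $\pi/2$; since the rotation $\calR_\gamma$ places $P_i$ and $e_i$ on opposite sides of $C_\gamma$, the two arcs of $c_i$ incident to $v_i^l$ alternate in the rotation with the two arcs of $c_\gamma$, forcing the tangents of $c_i$ and $c_\gamma$ at $v_i^l$ to be perpendicular. Thus $c_i \perp c_\gamma$ at $v_i^l$ and $v_i^r$. The same argument at the four vertices of $C_{ij}$, where the rotation alternates $C_{ij}$-edges with $C_i$- respectively $C_j$-edges, shows that $c_{ij}$ meets both $c_i$ and $c_j$ orthogonally.

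The second step is to show that for every $i < j$ the circles $c_i$ and $c_j$ meet in exactly one point inside the open disk $D_\gamma$ bounded by $c_\gamma$. The four vertices of $C_{ij}$ appear on $c_{ij}$ in the cyclic order $v_{i,j}^l, v_{j,i}^l, v_{i,j}^r, v_{j,i}^r$, so the arc of $c_i$ and the arc of $c_j$ inside the disk bounded by $c_{ij}$ have interleaving endpoints; a Jordan-arc argument then produces an intersection point of $c_i$ and $c_j$ in that disk. Since both $c_i$ and $c_j$ are orthogonal to $c_\gamma$, inversion in $c_\gamma$ fixes each of them setwise, so their two intersection points are inverses and exactly one of them lies in $D_\gamma$. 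The rotation system confines $C_{ij}$ (and hence this intersection) to the same side of $c_\gamma$ as the paths $P_i$ and $P_j$, which we declare to be $D_\gamma$.

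Interpreting $D_\gamma$ as a \Poincare disk, each arc $c_i \cap D_\gamma$ becomes a hyperbolic line $h_i$, and by the preceding step the arrangement $\{h_1, \ldots, h_n\}$ is simple. The order of intersection points along $h_i$ coincides with the order of the vertices $v_{i,j}^l$ along the path $P_i$, which by construction of $G_\gamma$ from $\calD$ is exactly the list of intersections for $\ell_i$. Hence $\{h_1, \ldots, h_n\}$ realizes $\calD$ hyperbolically, and Theorem~\ref{thm:hyperbolic_simple_stretchability} converts this into a Euclidean straight-line realization. The main obstacle I expect is precisely the geometric step: pinning down orthogonality and localizing the common intersection of $c_i$ and $c_j$ inside $D_\gamma$; once those are in hand, transporting the order of intersections from $P_i$ to $h_i$ and applying the theorem is routine.
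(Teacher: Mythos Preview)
Your proposal is correct and follows essentially the same route as the paper: derive $c_i\perp c_\gamma$ from perfect angular resolution, use the alternating $4$-cycle $C_{ij}$ to force a crossing of $a_i$ and $a_j$, use orthogonality to $c_\gamma$ to bound the number of such crossings by one, read off $\calD$ from the Poincar\'e-disk interpretation, and finish with Theorem~\ref{thm:hyperbolic_simple_stretchability}. The only cosmetic differences are that the paper normalizes $P_i\subset D_\gamma$ explicitly via a circle inversion (rather than by ``declaring'' which side is $D_\gamma$), and that your extra observation $c_{ij}\perp c_i,c_j$ is not actually needed for this direction.
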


\begin{proof}
    Let~$c_\gamma$ be the circle that~$C_\gamma$ is drawn as.
    We can assume without loss of generality that all edges~$e_i$ are drawn outside of~$c_\gamma$ and all paths~$P_i$ are drawn inside of~$c_\gamma$, as we can otherwise consider the drawing obtained from~$\Gamma$ by a circle inversion with respect to~$c_\gamma$.
    Recall that all vertices on~$C_\gamma$ have degree four.
    Let~$c_i$ be the circles that the cycles~$C_i$ are drawn as (for $i \in \{1, \ldots, n\}$).
    All~$c_i$ are orthogonal to~$c_\gamma$ because~$\Gamma$ has perfect angular resolution.

    This allows us to interpret~$c_\gamma$ as a \Poincare disk and each circular arc~$a_i$ of~$c_i$ containing the drawing of~$P_i$ in~$\Gamma$ as a hyperbolic line.
    Therefore, the interior of~$c_\gamma$ induces a hyperbolic line arrangement.
    We prove that this hyperbolic line arrangement has combinatorial description~$\calD$.

    To this end consider an arbitrary but fixed path~$P_i$.
    Along~$P_i$ from left to right we encounter pairs of vertices~$v_{i,j}^l$ and~$v_{i,j}^r$ that, together with~$v_{j,i}^l$ and~$v_{j,i}^r$ form a~$4$-cycle corresponding to the intersection between pseudolines~$\ell_i$ and~$\ell_j$ (assume $i < j$).
    As~$c_i$ and~$c_j$ are circles orthogonal to~$c_\gamma$, their circular arcs~$a_i$ and~$a_j$ intersect at most once.
    Further, as the vertices on~$C_{i,j}$ are alternately on~$P_i$ and~$P_j$, there must be an odd number of intersections between~$a_i$ and~$a_j$ inside the drawing of~$C_{i,j}$ in~$\Gamma$.
    It follows, that~$a_i$ and~$a_j$ intersect exactly once and they do so between~$v_{i,j}^l$ and~$v_{i,j}^r$.
    Thus, each pseudoline intersects each other pseudoline exactly once and in the order described by~$\calD$, i.e., our Lombardi drawing~$\Gamma$ induces a hyperbolic line arrangement with combinatorial description~$\calD$.
    Because~$\calD$ is simple and by Theorem~\ref{thm:hyperbolic_simple_stretchability} it follows that~$\calD$ is then also stretchable in~$\R^2$.
\end{proof}

Summarizing the results so far, we see that Lemmas~\ref{lem:stretchable_to_restricted_lombardi} and~\ref{lem:restricted_lombardi_to_stretchable} give a reduction from \SimpleStretchability to a restricted form of Lombardi drawing.
In what follows, we see how to incorporate these restrictions into the reduction itself.

\subsection{Enforcing the Circles}

In Lemmas~\ref{lem:stretchable_to_restricted_lombardi} and~\ref{lem:restricted_lombardi_to_stretchable} above we assumed that certain cycles in~$G_\gamma$ are drawn as circles.
Below we describe how we can omit this explicit restriction by enforcing all possible Lombardi drawings to \enquote{automatically} satisfy it.

By an \emph{arc-polygon} we denote a set of points~$v_0, \ldots, v_k$ such that~$v_i$ and~$v_{i+1}$ (with~$v_{k+1} = v_0$) are connected by a circular arc or line segment.
An arc-polygon is simple if it does not self-touch or self-intersect.
In case of two or three vertices we speak of a \emph{bigon} and an \emph{arc-triangle}, respectively.
We utilize a lemma by Eppstein, Frishberg and Osegueda~\cite{Eppstein2023_Angles} who characterized simple arc-triangles.
We follow their notation:
The vertices~$v_0$, $v_1$ and~$v_2$ are numbered in clockwise order such that the interior of the arc-triangle is to the right when going from~$v_i$ to~$v_{(i+1) \mod 3}$.
The vertices enclose internal angles~$\theta_0$, $\theta_1$ and~$\theta_2$.
If the vertices do not lie on a common line, then they define a unique circle~$c$.
In this case we denote by~$\phi_i$ the internal angle of the bigon enclosed by~$c$ and the circular arc~$a_i$ between~$v_{(i-1) \mod 3}$ and~$v_{(i+1) \mod 3}$.
Negative (positive) values of~$\phi_i$ mean that~$a_i$ is outside (inside) of~$c$ and~$\phi_i = 0$ means that~$a_i$ is on~$c$.
See Figure~\ref{fig:arc_triangle} for an illustration.

\begin{figure}[tb]
    \begin{minipage}[t]{0.5\textwidth}
        \centering
        \includegraphics{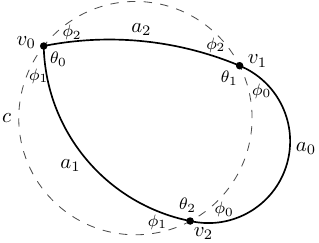}
        \captionof{figure}{A simple arc-triangle illustrating the used notation.}
        \label{fig:arc_triangle}
    \end{minipage}
    \hfill
    \begin{minipage}[t]{0.4\textwidth}
        \centering
        \includegraphics{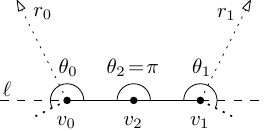}
        \captionof{figure}{
            Ruling out the collinear case in Lemma~\ref{lem:midpoint_on_circle}.
        }
        \label{fig:collinear_case}
    \end{minipage}
\end{figure}

\begin{lemma}[{\cite[Lemma~$4$ and Corollary~$5$]{Eppstein2023_Angles}}]
    \label{lem:bigon_angles}
    Let~$v_0$, $v_1$ and~$v_2$ be a simple arc-triangle as above, not on a common line.
    Then for $\psi = \bigl(\pi - \sum_{i=0}^2 \theta_i\bigr) / 2$ it holds that~$\phi_i = \psi + \theta_i$.
\end{lemma}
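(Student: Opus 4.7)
The plan is to prove the identity $\phi_i = \psi + \theta_i$ in two stages: first show that $\phi_i - \theta_i$ is independent of $i$, then pin down the common value by a global summation.

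For the first stage, I would work locally at each vertex $v_i$. Three curves meet there: the two incident sides $a_{i-1}, a_{i+1}$ of the arc-triangle and the circle $c$. Letting $\tau(\cdot)$ denote signed tangent directions, $\theta_i$ is the signed angle from $\tau(a_{i-1})$ to $\tau(a_{i+1})$ (measured so that the interior of the arc-triangle is on the right), while $\phi_{i-1}$ and $\phi_{i+1}$ are the signed angles between the corresponding side-tangent and $\tau(c)$, signed according to whether the respective side lies inside or outside of $c$. This expresses $\theta_i$ as a linear combination of $\phi_{i-1}, \phi_{i+1}$ and a correction term depending on the chosen orientation of $\tau(c)$ at $v_i$. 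When we move from $v_i$ to $v_{i+1}$ along $c$, the tangent $\tau(c)$ rotates by an amount pinned down by the inscribed-angle theorem (namely, twice the inscribed angle at the third vertex). Subtracting the local identities at $v_i$ and $v_{i+1}$ and using this rotation causes the $c$-dependent corrections to cancel, leaving $\phi_{i+1} - \theta_{i+1} = \phi_i - \theta_i$.

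For the second stage, writing $\phi_i - \theta_i \equiv \psi'$ and summing over $i$ yields $\sum_i \phi_i = 3\psi' + \sum_i \theta_i$. To determine $\psi'$, I would apply the exterior-angle (Gauss--Bonnet) identity to the arc-triangle, obtaining $\sum_i (\pi - \theta_i) + T = 2\pi$ where $T$ is the total signed turning along the boundary arcs $a_0, a_1, a_2$, and to each of the three bigons $B_i$, giving $2(\pi - \phi_i) + T_{a_i} + T_{c_i} = 2\pi$ where $T_{a_i}$ and $T_{c_i}$ are the signed turnings along $a_i$ and along the arc of $c$ bounding $B_i$. Adding these four identities, the turnings along the sides $a_0, a_1, a_2$ cancel in pairs (each $a_i$ is traversed once as part of the arc-triangle boundary and once as part of the bigon $B_i$, with opposite orientations), and one is left with a linear equation involving $\sum_i \theta_i$, $\sum_i \phi_i$ and $\sum_i T_{c_i}$. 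The latter equals $\pm 2\pi$ because the three arcs of $c$ either partition the circle exactly or double-cover it. Solving then gives $\psi' = (\pi + \sum_i \theta_i)/2 = \psi$.

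The main obstacle is the sign and orientation bookkeeping. Because $\phi_i$ is signed (positive when $a_i$ lies inside $c$, negative when outside), the reference orientation of $\tau(c)$ at $v_i$ used to measure $\phi_{i-1}$ may differ from the one used to measure $\phi_{i+1}$ at the same vertex. Similarly, applying Gauss--Bonnet with signed interior angles and clockwise traversals (so that the interior is on the right) requires care. Getting these conventions to match consistently across the three vertices, so that the local identities combine correctly and the Gauss--Bonnet turnings line up to cancel, is the essential technical step; the rest is linear arithmetic.
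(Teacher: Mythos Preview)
The paper does not prove this lemma at all: it is quoted verbatim from Eppstein, Frishberg and Osegueda~\cite{Eppstein2023_Angles} (their Lemma~3 and Corollary~4) and used as a black box. So there is no ``paper's own proof'' to compare your proposal against.

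That said, your two-stage strategy (first show $\phi_i-\theta_i$ is independent of~$i$ by a local tangent-direction computation at each vertex, then fix the common value via a total-turning/Gauss--Bonnet count) is a reasonable route and is close in spirit to how the cited paper argues. The original proof in~\cite{Eppstein2023_Angles} is somewhat more direct: at each vertex $v_i$ one writes $\theta_i$ as the inscribed (chord) angle of the circumscribed circle $c$ plus the two signed deviations $\phi_{i-1},\phi_{i+1}$, then uses that the three inscribed angles of a Euclidean triangle sum to~$\pi$. This replaces both your ``inscribed-angle rotation'' step and the Gauss--Bonnet summation by a single linear relation, and avoids the case analysis about whether the three $c$-arcs of the bigons partition or double-cover~$c$.

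Your plan is workable, but be aware that the step ``$\sum_i T_{c_i}=\pm 2\pi$ because the three arcs of $c$ either partition the circle exactly or double-cover it'' is not automatic: which arc of $c$ bounds the bigon $B_i$ depends on the sign of $\phi_i$, and the three choices are independent, so a priori the total length could be any of $2\pi$, $4\pi$, or even $0$ after signs. You would need to argue (using simplicity of the arc-triangle) that only the possibilities compatible with the claimed formula actually occur. The sign bookkeeping you flag as ``the main obstacle'' really is the whole content of the argument; the chord-angle approach of~\cite{Eppstein2023_Angles} sidesteps most of it.
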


We use Lemma~\ref{lem:bigon_angles} to prove that prescribing the interior angles of an arc-triangle to certain values is enough to guarantee that one of its vertices lies on the underlying circle of the circular arc connecting the other two vertices:

\begin{lemma}
    \label{lem:midpoint_on_circle}
    Let~$v_0$, $v_1$ and~$v_2$ be a simple arc-triangle as above such that $\theta_0 = \theta_1 \in [\pi, 3\pi/2)$ and~$\theta_2 = \pi$.
    Further, edge~$v_0v_2$ is drawn as a circular arc~$a_1$ (and not as a line segment) with underlying circle~$c_1$.
    Then~$v_0$, $v_1$ and~$v_2$ do not lie on a common line and~$c_1$ is the unique circle through them, with~$v_1$ on~$c_1 \setminus a_1$.
\end{lemma}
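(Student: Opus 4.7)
The plan is to proceed in two steps: first show that $v_0$, $v_1$, $v_2$ are not collinear (so that Lemma~\ref{lem:bigon_angles} becomes applicable), then use that lemma together with the tangency at $v_2$ to identify $c_1$ with the unique circumscribed circle of the three vertices.

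For non-collinearity, I will argue by contradiction: assume $v_0$, $v_1$ and $v_2$ all lie on a common line $\ell$. Since $a_1$ is a non-degenerate circular arc with both endpoints $v_0, v_2 \in \ell$, it bulges off $\ell$ into one of the two half-planes, and in particular its tangent at $v_2$ is transverse to $\ell$. The condition $\theta_2 = \pi$ makes $a_0$ tangent to $a_1$ at $v_2$, so $a_0$ also leaves $v_2$ transversally to $\ell$, yet it must terminate at $v_1 \in \ell$. A short case analysis using the reflex interior angles $\theta_0 = \theta_1 \in [\pi, 3\pi/2)$ to control the tangent directions of $a_2$ at $v_0$ and of $a_0$ at $v_1$ then shows that any resulting placement of $v_1$ on $\ell$ forces $a_0$ or $a_2$ to re-cross one of the other arcs, violating simplicity of the arc-triangle; this is the content of Figure~\ref{fig:collinear_case}.

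Once non-collinearity is established, the three vertices lie on a unique circumscribed circle $c$, and Lemma~\ref{lem:bigon_angles} applies. Writing $\theta := \theta_0 = \theta_1$, the lemma yields $\psi = (\pi + 2\theta + \pi)/2 = \pi + \theta$, and hence $\phi_0 = \phi_1 = \pi + 2\theta$ and $\phi_2 = 2\pi + \theta$. The key step will be to translate these bigon-angle values into the statement that $a_1$ lies on $c$, i.e.\ that $c_1 = c$: the tangency of $a_0$ and $a_1$ at $v_2$ means their underlying circles $c_0$ and $c_1$ are mutually tangent there, and combined with the symmetry $\phi_0 = \phi_1$ this forces $c_0 = c_1 = c$. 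Consequently $v_1 \in c_1$. Simplicity of the arc-triangle then gives $v_1 \neq v_0, v_2$ and $v_1 \notin a_1$ (otherwise $a_0$ or $a_2$ would have to meet $a_1$ again), so $v_1 \in c_1 \setminus a_1$ as claimed.

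The main obstacle is the collinearity step: several subcases arise depending on the cyclic order of $v_0, v_1, v_2$ on $\ell$ and on which side $a_1$ bulges to, and each must be ruled out through a local tangent-line argument exploiting the reflex-angle bounds. The second step is then a direct application of Lemma~\ref{lem:bigon_angles} together with elementary circle geometry.
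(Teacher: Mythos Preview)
Your second step has a genuine gap. The whole point of invoking Lemma~\ref{lem:bigon_angles} is that, with the given angle data, it yields $\phi_1 = 0$ \emph{directly}: the arc $a_1$ opposite $v_1$ lies on the circumscribed circle $c$, hence $c_1 = c$. (Concretely, with the correct sign conventions one gets $\psi = \theta$ and $\phi_1 = \theta_1 - \psi = 0$; the formula as printed in the paper has a sign slip.) That single equality is the entire argument. You instead compute $\phi_0 = \phi_1 = \pi + 2\theta$ and $\phi_2 = 2\pi + \theta$, which already shows you are using the wrong version of the formula, and then try to reach $c_0 = c_1 = c$ from ``$c_0$ and $c_1$ are tangent at $v_2$'' together with ``$\phi_0 = \phi_1$''. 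That inference is not valid: two circles tangent at a common point of $c$ and making equal nonzero angles with $c$ there need not coincide with $c$ (nor with each other). Equal bigon angles only say that $a_0$ and $a_1$ sit symmetrically relative to $c$, not that either of them lies on $c$. So as written the key identification $c_1 = c$ is unproved.

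For the collinearity step, your plan is also weaker than what the paper actually does. Rather than a case analysis on self-crossings, the paper gives a one-line obstruction: if $v_0,v_1,v_2$ are collinear, then the prescribed angles $\theta_0 = \theta_1 \in [\pi,3\pi/2)$ determine at $v_0$ and at $v_1$ the rays toward the centre of the circle supporting $a_2$, and these two rays are forced to be disjoint, so no such arc $a_2$ exists. This avoids the multi-case tangent-direction analysis you outline and does not rely on simplicity at all.
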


\begin{proof}
    We first rule out the case that all three vertices lie on a common line~$\ell$, see Figure~\ref{fig:collinear_case}:
    As the internal angle at~$v_2$ has size~$\pi$, vertices~$v_0$ and~$v_1$ must be on opposite sides of~$\ell$.
    The internal angle~$\theta_0$ at~$v_0$ defines a ray~$r_0$ that must contain the center of the underlying circle of~$a_2$ (the circular arc connecting~$v_0$ and~$v_1$).
    Similarly,~$\theta_1$ defines another ray~$r_1$ that must contain the center of~$a_2$.
    However, $r_0 \cap r_1 = \emptyset$, because~$\theta_0 = \theta_1 \in [\pi, 3\pi/2)$.
    We conclude that the three vertices cannot lie on a common line and therefore lie on a unique circle~$c$.

    We use Lemma~\ref{lem:bigon_angles} to compute the internal angle of the bigon enclosed by~$a_1$ and~$c$:
    We get that~$\psi = (\pi - \sum_{i=0}^2 \theta_i)/2 = -\theta_1$ and with that $\phi_1 = -\theta_1 + \theta_1 = 0$, i.e.,~$a_1$ must lie on~$c$ and in particular~$c_1 = c$.
    As simple arc-triangles do not self-intersect or self-touch, it follows that~$v_1$ lies on~$c_1 \setminus a$.
\end{proof}

For our reduction we construct, for a given pseudoline arrangement~$\calD$, a graph~$G_\calD$ with rotation system~$\calR_\calD$, which extend~$G_\gamma$ and~$\calR_\gamma$ by new vertices and edges.
As we will see, this forms several arc-triangles in~$G_\calD$ that fulfill the conditions of Lemma~\ref{lem:midpoint_on_circle}.
Iteratively applying this lemma will allow us to prove that~$C_\gamma$ as well as all~$C_i$ (for~$i \in \{1, \ldots, n\}$) and all~$C_{i,j}$ (for $i,j \in \{1, \ldots, n\}$ with $i < j$) must be drawn as circles in any Lombardi drawing of~$G_\calD$ respecting~$\calR_\calD$.

Let us introduce some notation for cycles $C = (v_1, \ldots, v_k)$.
We denote by~$e_i$ the edge of the cycle connecting~$v_i$ and~$v_{i+1}$ (where~$v_{k+1} = v_1$).
If each~$v_i$ (for~$i \in \{1, \ldots, k\}$) has degree four, then the incident edges form four angles between them, which in case of perfect angular resolution must all have size~$\pi/2$.
We call these the \emph{quadrants} of~$v_i$ and label them by~$q_i^1, \ldots, q_i^4$ in counterclockwise order such that~$q_i^1$ is left of~$e_i$ when traversing it from~$v_i$ to~$v_{i+1}$.
By a \emph{half-edge} we denote an incident edge to a vertex whose other endpoint is not yet specified.

A \emph{circle gadget} for a cycle~$C = (v_1, \ldots, v_k)$ as above together with~$k-3$ additional half-edges in each quadrant of all~$v \in V(C)$ is the following set of edges:
For $j \in \{1, \ldots, k-3\}$, the $j$-th half-edge of~$q_1^1$ in clockwise order is joined with the $j$-th half-edge of~$q_{k-j}^2$ in counterclockwise order, see Figure~\ref{fig:circle_gadget}.
The following lemma shows that these edges enforce that~$C$ is drawn as a circle.

\begin{figure}[tb]
    \begin{minipage}[t]{0.63\textwidth}
        \centering
        \includegraphics{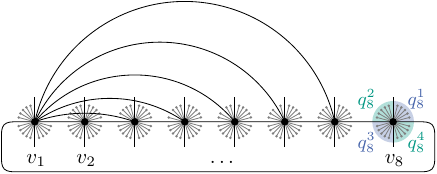}
        \captionof{figure}{
            Circle gadget for a cycle $C = (v_1, \ldots, v_8)$.
            The quadrants at~$v_8$ are labeled.
        }
        \label{fig:circle_gadget}
    \end{minipage}
    \hfill
    \begin{minipage}[t]{0.31\textwidth}
        \centering
        \includegraphics{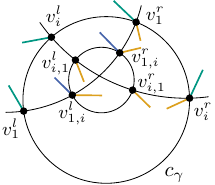}
        \captionof{figure}{
            How to combine different circle gadgets.
        }
        \label{fig:combined_circle_gadgets}
    \end{minipage}
\end{figure}

\begin{lemma}
    \label{lem:circle_forcing}
    Let~$G$ be a graph containing a cycle~$C = (v_1, \ldots, v_k)$ with the edges of a circle gadget as described above.
    Then in every Lombardi drawing~$\Gamma$ that maps edge~$e_k$ to a circular arc~$a$ (i.e., not to a line segment), all vertices and edges of~$C$ are drawn onto the underlying circle~$c$ of~$a$.
\end{lemma}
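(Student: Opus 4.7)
The plan is to prove Lemma~\ref{lem:circle_forcing} by iterating Lemma~\ref{lem:midpoint_on_circle} along the cycle $C$, starting from the known circular arc $a$ representing $e_k$. The gadget supplies, from $v_1$, chords to each non-adjacent cycle vertex $v_{k-1}, v_{k-2}, \ldots, v_3$; together with consecutive cycle edges, these chords form simple arc-triangles whose interior angles are constrained by perfect angular resolution and the prescribed rotation system in exactly the way demanded by Lemma~\ref{lem:midpoint_on_circle}.

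The base case applies Lemma~\ref{lem:midpoint_on_circle} to the arc-triangle with clockwise corners $v_k$, $v_{k-1}$, $v_1$ and sides $e_{k-1}$, the gadget chord $v_1 v_{k-1}$, and the arc $a = e_k$. The rotation system places the two cycle edges at every cycle vertex in opposite positions (as is necessary anyway for the cycle to ever lie on a circle), and puts the chord at $v_1$ so that the interior angle of the arc-triangle at $v_1$ equals exactly $\pi$; perfect angular resolution together with the symmetric quadrant structure makes the interior angles at $v_k$ and $v_{k-1}$ equal and contained in $[\pi, 3\pi/2)$. Lemma~\ref{lem:midpoint_on_circle} then places $v_{k-1}$ on the underlying circle $c$ and forces the chord $v_1 v_{k-1}$ onto $c$ as well, because the tangent direction at $v_1$ together with the two endpoints on $c$ uniquely identifies $c$ as the underlying circle of the chord. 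Inducting with the arc-triangle $v_{k-j+1}$, $v_{k-j}$, $v_1$ (using the previously placed chord $v_1 v_{k-j+1}$ as the known circular side) places each $v_{k-j}$ on $c$ in turn, and a last application to $v_1, v_2, v_3$ places $v_2$ as well. Each cycle edge is then forced onto $c$ by a tangent argument: at every cycle vertex, its two cycle edges are opposite in the rotation, so once one of them lies on $c$, the other has its tangent at the shared vertex matching $c$'s tangent and its other endpoint on $c$, uniquely determining its underlying circle as $c$.

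The main obstacle is verifying the interior-angle conditions of each arc-triangle. This requires translating the gadget's pairing rule (joining the $j$-th half-edge of $q_1^1$ at $v_1$ with the $j$-th half-edge of $q_{k-j}^2$ at $v_{k-j}$) into concrete tangent directions via perfect angular resolution, and then checking that the interior angles of the resulting arc-triangles fall exactly into the ranges mandated by Lemma~\ref{lem:midpoint_on_circle}. The pairing is engineered so that the two sides of each arc-triangle meeting at $v_1$ are antipodal in the rotation at $v_1$, yielding the $\pi$-angle hypothesis, while an analogous symmetry at $v_{k-j}$ and $v_{k-j+1}$ yields the equal-angle hypothesis. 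A secondary technicality is ruling out the possibility that a chord or cycle edge is drawn as a line segment rather than a circular arc, as required by Lemma~\ref{lem:midpoint_on_circle}; this is immediate once both endpoints of such an edge lie on $c$ and its tangent at one endpoint is fixed, since the only geometric realization consistent with these constraints is an arc of~$c$.
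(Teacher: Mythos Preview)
Your overall strategy---iterate Lemma~\ref{lem:midpoint_on_circle} along~$C$ using the gadget edges---is exactly the paper's, but the angle bookkeeping is off in a way that breaks both the base case and the induction.

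In the base arc-triangle on $v_1,v_{k-1},v_k$ the vertex with interior angle~$\pi$ is~$v_k$, not~$v_1$: at~$v_k$ the two sides are the consecutive cycle edges $e_{k-1}$ and $e_k$, which are opposite in the rotation, whereas at~$v_1$ the two sides are $e_k$ and the gadget chord to~$v_{k-1}$. That chord sits in quadrant~$q_1^1$, while the edge opposite~$e_k$ is the cycle edge~$e_1$; hence the angle at~$v_1$ is \emph{not}~$\pi$. The gadget is engineered so that the angles at~$v_1$ and at~$v_{k-1}$ (each between one cycle edge and the chord) are equal and lie in $[\pi,3\pi/2)$, matching $\theta_0=\theta_1$ in Lemma~\ref{lem:midpoint_on_circle}; your pairing $\theta_0=\theta_1$ at $v_k$ and $v_{k-1}$ cannot hold, since the angle at $v_k$ is~$\pi$ while the one at $v_{k-1}$ is not.

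The second problem is the claim that the chord $v_1v_{k-1}$ itself lies on~$c$. It does not: its tangent at~$v_1$ differs from the tangent of~$c$ (which is the tangent of~$e_k$), so its underlying circle is some other circle. Consequently your inductive arc-triangle $(v_{k-j+1},v_{k-j},v_1)$, whose sides are two gadget chords and one cycle edge, has no vertex with interior angle~$\pi$ (at~$v_1$ the two adjacent chords differ by a single unit angle $\pi/(2(k-2))$), and Lemma~\ref{lem:midpoint_on_circle} is inapplicable. The paper avoids this by taking as the ``known'' side the growing arc of~$c$ from~$v_1$ through $v_k,\ldots,v_{k-j+1}$; then at $v_{k-j+1}$ the two sides are again consecutive cycle edges, giving the required angle~$\pi$, and Lemma~\ref{lem:midpoint_on_circle} places $v_{k-j}$ (and $e_{k-j}$) on~$c$. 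Fixing the role of the $\pi$-vertex and carrying the cycle arc rather than the chord through the induction repairs your argument.
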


\begin{proof}
    First note that each vertex has equally many incident (half-)edges in each of its four quadrants, so by the perfect angular resolution of~$\Gamma$, each quadrant spans an angle of~$\pi/2$.
    In particular, between any two consecutive cycle edges there is an angle of~$\pi$ in~$\Gamma$.
    Now consider the three cycle vertices~$v_1$, $v_{k-1}$ and~$v_k$ which form a simple arc-triangle whose internal angles satisfy the conditions of Lemma~\ref{lem:midpoint_on_circle}.
    It follows that~$v_{k-1}$ and~$e_{k-1}$ are drawn onto~$c \setminus a$.

    As we now know that the path from~$v_1$ counterclockwise via~$v_k$ to~$v_{k-1}$ follows a single circular arc~$a'$ in~$\Gamma$, the same argument can be repeated for the simple arc-triangle formed by the points~$v_1$, $v_{k-2}$ and~$v_{k-1}$.
    It follows that~$v_{k-2}$ and~$e_{k-2}$ lie on~$c$.
    Iterating the argument until we reach the simple arc-triangle formed by~$v_1$, $v_2$ and~$v_3$ proves the statement.
\end{proof}

With the circle gadget at hand, we can finally construct~$G_\calD$ and~$\calR_\calD$.
Recall that~$G_\gamma$ is $4$-regular.
We add~$2n-3$ half-edges into each quadrant of every vertex~$v \in V(G_\gamma)$ and then the following circle gadgets:
\begin{itemize}
    \item For the cycle~$C_\gamma = (v_1^l, \ldots, v_n^l, v_1^r, \ldots, v_n^r)$ on~$2n$ vertices.
    Here~$v_1^l$ takes the role of~$v_1$ and~$v_n^r$ takes the role of~$v_k$ in the circle gadget.
    \item For each cycle~$C_i$ on~$2n$ vertices (for~$i \in \{1, \ldots, n\}$).
    Here~$v_i^r$ takes the role of~$v_1$ and~$v_i^l$ takes the role of~$v_k$ in the circle gadget.
    \item For each cycle~$C_{i,j}$ (for~$i, j \in \{1, \ldots, n\}$ with~$i < j$).
    Here~$v_1 = v_{i,j}^l$, $v_2 = v_{j,i}^l$, $v_3 = v_{i,j}^r$ and~$v_4 = v_{j,i}^r$.
\end{itemize}
Note that several vertices are involved in multiple circle gadgets in our construction.
This is not a problem because we carefully placed the circle gadgets such that no two circle gadgets operate in the same quadrant on each vertex.
See Figure~\ref{fig:combined_circle_gadgets} for a visualization (for simplicity, just~$C_1$ and one~$C_i$ are drawn):
Green half-edges show which quadrants are used by the circle gadget for~$C_\gamma$.
Orange half-edges belong to the circle gadgets of~$C_1$ and~$C_i$.
Lastly, blue half-edges belong to the circle gadget of~$C_{1,i}$.

\medskip
As the last step of the reduction, all remaining half-edges are terminated with a new vertex of degree one.
The resulting graph and rotation system are~$G_\calD$ and~$\calR_\calD$.

\begin{lemma}
    \label{lem:stretchable_to_lombardi}
    If~$\calD$ is stretchable, then~$G_\calD$ has a Lombardi drawing respecting~$\calR_\calD$.
\end{lemma}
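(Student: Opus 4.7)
The plan is to extend the Lombardi drawing $\Gamma'$ of $G_\gamma$ given by Lemma~\ref{lem:stretchable_to_restricted_lombardi} to a Lombardi drawing of $G_\calD$ by specifying where to place the additional half-edges introduced in forming $G_\calD$ (those of the circle gadgets and those ending in pendant vertices), while maintaining perfect angular resolution at every cycle vertex.

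In $\Gamma'$ each of the distinguished cycles $C_\gamma$, $C_i$, $C_{ij}$ lies on a circle $c$, and at each of its vertices the two cycle edges are tangent to $c$ while the remaining two (degree-four) edges are perpendicular to $c$, since the other cycle through that vertex sits on a circle orthogonal to $c$. Consequently, within each quadrant $\calR_\calD$ prescribes the $2n-3$ added half-edges at angles $j\pi/(4n-4)$ from the bounding cycle edge, for $j = 1, \dots, 2n-3$. The geometric ingredient I would use is the following: for two points $u, w$ on a circle $c$, the pencil of circles through $u$ and $w$ is invariant under the reflection across the perpendicular bisector of $uw$, which is also an axis of symmetry of $c$; hence every member of the pencil meets $c$ at the same angle at $u$ and at $w$, and, as the pencil is swept, this common angle takes every value in $(0,\pi)$.

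For each gadget edge $\{v_a,v_b\}$ on a cycle lying on a circle $c$, I would draw it as the arc of the unique pencil member through $v_a, v_b$ whose tangent at $v_a$ realizes the slot at $v_a$ prescribed by $\calR_\calD$, picking the arc on the side of $c$ corresponding to the active quadrant. By the symmetry above, its tangent at $v_b$ automatically makes the same angle with the tangent to $c$ at $v_b$. The main obstacle is to verify that this tangent at $v_b$ falls into the correct slot of the correct quadrant at $v_b$: this is exactly the reason the pairing rule of the circle gadget is the reflection-symmetric one (the $j$-th half-edge of $q_a^1$ counted clockwise is joined with the $j$-th half-edge of $q_b^2$ counted counterclockwise), and a careful bookkeeping of the quadrant labels along the cycle should confirm the match. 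The leftover half-edges are then drawn as short line segments in their prescribed directions with the new degree-one vertices placed at their ends; pendants impose no angular constraint, and a generic choice of the remaining free parameters (arc radii for pendants, and the inside/outside side of $c$ for each gadget arc, where both are available) avoids coincident arcs, yielding the required Lombardi drawing of $G_\calD$ respecting $\calR_\calD$.
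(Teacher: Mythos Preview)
Your proposal is correct and follows essentially the same approach as the paper: start from the restricted Lombardi drawing of~$G_\gamma$ supplied by Lemma~\ref{lem:stretchable_to_restricted_lombardi}, observe that the added half-edges subdivide each quadrant evenly so the old angles are preserved, realize every circle-gadget edge by the unique circular arc meeting the underlying cycle-circle at equal angles at its two endpoints, and finish by drawing the pendants as short stubs. The only substantive difference is how the key geometric fact is justified: the paper simply notes that the gadget pairing makes the two angles with~$c$ equal and invokes \cite[Property~1]{Duncan2012_Lombardi} for the existence of the arc, whereas you supply a self-contained argument via the reflection symmetry of the pencil of circles through the two endpoints. One small cleanup: the ``inside/outside side of~$c$'' for a gadget arc is not actually a free parameter---once the tangent direction at~$v_a$ is fixed by the slot in~$q_a^1$, the arc (and hence its side) is determined; the genericity needed to avoid degeneracies comes only from the pendant lengths.
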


\begin{proof}
    We start by applying Lemma~\ref{lem:stretchable_to_restricted_lombardi} to obtain a Lombardi drawing~$\Gamma_\gamma$ of the subgraph~$G_\gamma$ of~$G_\calD$ respecting~$\calR_\gamma$ and in which cycle~$C_\gamma$, all~$C_i$ (for~$i \in \{1, \ldots, n\})$ and all~$C_{i,j}$ (for $i,j \in \{1, \ldots, n\}$ with $i < j$) are drawn as circles.

    It remains to draw the vertices and edges added by the circle gadgets.
    Recall that equally many edges were added into each quadrant of all vertices~$v \in V(G_\gamma)$.
    Thus, the angles between edges in~$E(G_\gamma)$ remain unchanged and the new edges must be drawn with equal angles between them into their quadrants to obtain perfect angular resolution.

    Let~$e$ be an edge of a circle gadget with endpoints~$u$ and~$v$ and let~$c$ be the circle that~$u$ and~$v$ lie on in~$\Gamma_\gamma$.
    By construction, $e$ was obtained by joining the $j$-th half-edge in clockwise order in the first quadrant of~$u$ with the $j$-th half-edge in counterclockwise order in the second quadrant of~$v$ for some~$j$.
    Thus the two angles between~$c$ and the arc representing~$e$ at~$u$ and~$v$ are equal and in~$[0,\pi/2)$.
    There is exactly one circular arc that~$e$ can be drawn onto~\cite[Property~$1$]{Duncan2012_Lombardi}.

    Lastly, we need to make sure that the vertices of degree~$1$ that resulted from unjoined half-edges are drawn such that they do not lie on any other edge.
    This can be achieved by drawing the half-edges sufficiently short.
\end{proof}

\begin{lemma}
    \label{lem:lombardi_to_stretchable}
    If~$G_\calD$ has a Lombardi drawing~$\Gamma$ respecting~$\calR_\calD$, then~$\calD$ is stretchable.
\end{lemma}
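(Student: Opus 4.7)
The plan is to mirror the proof of Lemma~\ref{lem:stretchable_to_lombardi} in reverse: I would show that any Lombardi drawing $\Gamma$ of $G_\calD$ respecting $\calR_\calD$ must draw each of the cycles $C_\gamma$, every $C_i$ (for $i \in \{1,\ldots,n\}$) and every $C_{ij}$ (for $i < j$ in $\{1,\ldots,n\}$) onto a circle. Once this is established, restricting $\Gamma$ to the subgraph $G_\gamma \subseteq G_\calD$ yields a Lombardi drawing of $G_\gamma$ respecting $\calR_\gamma$ in which all cycles required by Lemma~\ref{lem:restricted_lombardi_to_stretchable} lie on circles, and that lemma immediately gives stretchability of $\calD$.

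The engine for the circle-forcing step is Lemma~\ref{lem:circle_forcing}, which I would invoke once for each cycle above. The construction of $G_\calD$ was designed precisely to attach a circle gadget to each of these cycles, and the gadgets for different cycles were deliberately routed through disjoint quadrants at every shared vertex. Consequently the hypotheses on angular resolution needed by Lemma~\ref{lem:circle_forcing} hold independently for each gadget, and the conclusions combine without interference.

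The main obstacle is a technical hypothesis of Lemma~\ref{lem:circle_forcing}: it only fires when the designated closing edge $e_k$ of the gadget is drawn as a genuine circular arc and not as a straight line segment. To handle this uniformly, I would first preprocess $\Gamma$ by a single circle inversion whose center $m$ is chosen generically, in particular so that $m$ lies on none of the finitely many Euclidean lines supporting a closing edge of any gadget. Such an inversion is conformal and sends every line not through $m$ to a circle, hence maps our Lombardi drawing to another Lombardi drawing of $G_\calD$. The inversion reverses orientations globally, so the resulting drawing respects the reverse of $\calR_\calD$; this is harmless, since one may additionally reflect the plane to recover a drawing that respects $\calR_\calD$ itself. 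After this preprocessing no edge is straight, Lemma~\ref{lem:circle_forcing} applies to every gadget simultaneously, and the circle-forcing step goes through.

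With these pieces in place the proof is short: apply the inversion, invoke Lemma~\ref{lem:circle_forcing} once per gadget to obtain a Lombardi drawing of $G_\gamma$ respecting $\calR_\gamma$ with $C_\gamma$, all $C_i$ and all $C_{ij}$ drawn onto circles, and finish with Lemma~\ref{lem:restricted_lombardi_to_stretchable}. I expect no further subtleties beyond verifying that a generic inversion center can indeed be chosen, which is immediate since the set of bad centers is a finite union of lines.
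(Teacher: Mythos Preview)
Your approach is correct and is in fact cleaner than the paper's, but it differs from it in how the hypothesis of Lemma~\ref{lem:circle_forcing} (that the closing edge~$e_k$ is a genuine circular arc) is secured. The paper treats the three cycle types separately: for~$C_\gamma$ it performs one inversion to make the edge~$v_n^rv_1^l$ an arc; for each~$C_i$ it first normalizes (by a further inversion in~$c_\gamma$) so that the~$P_i$ lie inside~$c_\gamma$, and then argues geometrically that the closing edge~$e_i$, being outside~$c_\gamma$ with endpoints on~$c_\gamma$, cannot be a chord and is therefore an arc; for the~$C_{ij}$ it does \emph{not} invoke Lemma~\ref{lem:circle_forcing} at all but simply redraws the $4$-cycle as a small orthogonal circle via Lemma~\ref{lem:circle_around_intersection}. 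Your single generic inversion dispatches all three cases uniformly and avoids the detour through Lemma~\ref{lem:circle_around_intersection}; the paper's staged argument, on the other hand, makes the role of~$c_\gamma$ as a \Poincare disk more transparent along the way.

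One small imprecision worth fixing: you require~$m$ to avoid the \emph{lines} supporting closing edges, but a closing edge may already be a circular arc, and if~$m$ lies on its supporting circle the inversion sends it to a line. So the generic choice of~$m$ must also avoid the finitely many supporting circles of arc-valued edges (and, of course, all vertex positions). This is still a finite union of measure-zero curves, so your ``generic $m$'' argument goes through unchanged once stated correctly.
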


\begin{proof}
    Recall that edges of~$G_\calD$ are mapped to circular arcs or line segments in~$\Gamma$ and that each vertex has perfect angular resolution.
    We begin by analyzing how the cycle~$C_\gamma = (v_1^l, \ldots, v_n^l, v_1^r, \ldots, v_n^r)$ in~$G_\calD$ must be drawn in~$\Gamma$.
    We can assume that~$v_1^l$, $v_n^r$ and~$v_{n-1}^r$ do not lie on a common line and that the edge between~$v_1^l$ and~$v_n^r$ is mapped to a circular arc~$a$ (by a suitable circle inversion).
    Then by Lemma~\ref{lem:circle_forcing} all vertices and edges of~$C_\gamma$ lie on the underlying circle~$c_\gamma$ of~$a$.

    Next, we consider the circles~$C_i$ for~$i \in \{1, \ldots, n\}$.
    By applying a suitable circle inversion with respect to~$c_\gamma$ if necessary, we can assume that the paths~$P_i$ are drawn inside~$c_\gamma$.
    In converse, the edges~$e_i$ are drawn outside of~$c_\gamma$ and therefore must be drawn as circular arcs (because a line segment would be inside~$c_\gamma$).
    Applying Lemma~\ref{lem:circle_forcing} to each~$C_i$ yields that it is drawn as a circle~$c_i$ in~$\Gamma$.

    It remains to consider the circles~$C_{i,j}$ for~$i,j \in \{1, \ldots, n\}$ with~$i < j$.
    If they are not already drawn as a circle~$c_{i,j}$ we can replace their drawing by a circle with sufficiently small radius by Lemma~\ref{lem:circle_around_intersection} (in the appendix).

    Now it follows from Lemma~\ref{lem:restricted_lombardi_to_stretchable} that~$\calD$ is stretchable.
\end{proof}

\medskip
At this point we can finally prove our main result, Theorem~\ref{thm:lombardi_er_complete}:

\begin{proof}[of Theorem~\ref{thm:lombardi_er_complete}]
    Let~$\calD$ be a combinatorial description of a simple pseudoline arrangement.
    Construct~$G_\calD$ and~$\calR_\calD$ as described above.
    By Lemmas~\ref{lem:stretchable_to_lombardi} and~\ref{lem:lombardi_to_stretchable}, $\calD$ is stretchable if and only if~$G_\calD$ admits a Lombardi drawing respecting~$\calR_\calD$, proving \ER-hardness.
    We prove \ER-membership in Lemma~\ref{lem:membership} (in the appendix).
\end{proof}


\section{Conclusion and Open Problems}

In this paper we proved that it is \ER-complete to decide whether a given graph~$G$ with a fixed rotation system~$\calR$ admits a Lombardi drawing respecting~$\calR$.
To the best of our knowledge, this is the first result on the complexity of Lombardi drawing for general graphs.

In fact, Lombardi drawing is just a special case of the more general problem where instead of enforcing perfect angular resolution we want to draw a graph with circular arc edges and angles of prescribed size.
Our reduction immediately proves \ER-hardness for this problem as well:

\begin{corollary}
    Let~$G$ be a graph with a rotation system~$\calR$ and let~$\Theta$ be an angle assignment prescribing the size of all angles in~$\calR$.
    Then it is \ER-complete to decide whether~$G$ admits a drawing with edges as circular arcs or line segments respecting~$\calR$ and~$\Theta$.
\end{corollary}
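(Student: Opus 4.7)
My plan is to derive the corollary directly from Theorem~\ref{thm:lombardi_er_complete}. The Lombardi drawing problem is precisely the special case in which $\Theta$ assigns the value $2\pi/\deg(v)$ to every angular slot around every vertex $v$. For \ER-hardness I would therefore take the instance $(G_\calD, \calR_\calD)$ produced by the reduction from \SimpleStretchability and equip it with exactly this perfect-resolution $\Theta$. A drawing respecting $\calR_\calD$ and this particular $\Theta$ is, by definition, a Lombardi drawing respecting $\calR_\calD$, so the biconditional supplied by Lemmas~\ref{lem:stretchable_to_lombardi} and~\ref{lem:lombardi_to_stretchable} transfers verbatim: $\calD$ is stretchable iff the prescribed-angles instance is a yes-instance.

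For \ER-membership I would mimic the witness-checking argument of Lemma~\ref{lem:membership} on a real RAM. A yes-certificate still consists of real coordinates for each vertex together with, for each edge, either a line-segment description or the center and radius of the supporting circle. The first four checks of that proof (no edge passes through a foreign vertex, no two edges share more than a point, no two vertices coincide, the induced rotation system equals $\calR$) carry over unchanged. The only new check replaces perfect angular resolution by the condition that, at every vertex $v$ and between every pair of cyclically consecutive incident edges, the oriented tangent angle equals the corresponding entry of $\Theta$. Each such angle is a constant-time arithmetic expression in the coordinates, centers and radii of the adjacent edges, so the whole verification runs in polynomial time, and the result of Erickson, van der Hoog and Miltzow then yields \ER-membership.

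I do not anticipate a real obstacle here, since both parts simply reuse machinery developed for the Lombardi case. The only minor subtlety I would address explicitly is the encoding of $\Theta$: for the real-RAM comparison of the computed tangent angle against the prescribed value to be well-defined, $\Theta$ should be given in a format compatible with real-RAM inputs (for instance as rationals or as rational multiples of $\pi$). One may additionally reject in polynomial time any instance whose prescribed angles fail to be positive or fail to sum to $2\pi$ around each vertex; this is a cosmetic preprocessing step and does not affect the complexity.
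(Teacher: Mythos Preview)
Your proposal is correct and matches the paper's intended argument: the paper does not give an explicit proof of the corollary but states just before it that Lombardi drawing is a special case and that ``our reduction immediately proves \ER-hardness for this problem as well''. Your hardness argument (equip the constructed instance with the perfect-resolution $\Theta$) and your membership argument (reuse the real-RAM verification of Lemma~\ref{lem:membership} with the angular-resolution check replaced by comparing against $\Theta$) spell out exactly this.
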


On the other hand, several interesting questions remain open:
Our reduction heavily relies on fixing the rotation system~$\calR$.
By the perfect angular resolution requirement this fixes all angles in every Lombardi drawing.
We wonder whether the problem remains \ER-complete without fixing~$\calR$:

\begin{openproblem}
    What is the computational complexity of deciding whether a graph admits any Lombardi drawing (without fixing a rotation system~$\calR$)?
\end{openproblem}

Given a planar graph, one usually asks for a planar Lombardi drawing.
The graphs constructed in our reduction are in general not planar.
In fact, they contain arbitrarily large clique minors.
This motivates our second open problem:

\begin{openproblem}
    What is the complexity of deciding whether a planar graph admits a planar Lombardi drawing (with or without fixing a rotation system~$\calR$)?
\end{openproblem}

\section*{Acknowledgements}

We thank Torsten Ueckerdt, Laura Merker and three anonymous reviewers for carefully reading this manuscript and providing valuable feedback.

\renewcommand\UrlFont{\color{blue}\rmfamily}
\bibliographystyle{splncs04}
\bibliography{references}

\appendix

\section{Omitted Details}
\label{sec:omitted_details}

In the proof of Lemma~\ref{lem:stretchable_to_restricted_lombardi} we construct a Lombardi drawing~$\Gamma_\gamma$ of~$G_\gamma$.
There, for each pseudoline~$\ell_i$ the corresponding path~$P_i$ is drawn along a single circular arc.
Given a pair~$\ell_i, \ell_j$ of two pseudolines (with $i < j$), their corresponding circular arcs~$a_i$ and~$a_j$ cross and we draw a sufficiently small circle~$c_{i,j}$ enclosing their intersection.
Then the vertices~$v_{i,j}^l$, $v_{j,i}^l$, $v_{i,j}^r$ and~$v_{j,i}^r$ are placed on the intersection of~$c_{i,j}$ with~$a_i$ and~$a_j$.
To achieve perfect angular resolution, $c_{i,j}$ must be orthogonal to both circular arcs.
See again Figure~\ref{fig:restricted_lombardi_drawing}.
The following lemma guarantees the existence of such a circle:

\begin{lemma}
    \label{lem:circle_around_intersection}
    Let~$a_1$ and~$a_2$ be two circular arcs with a unique proper intersection~$p$ (i.e., not a touching).
    There is a sufficiently small circle~$c$ orthogonal to~$a_1$ and~$a_2$ enclosing~$p$.
\end{lemma}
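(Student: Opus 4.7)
The plan is to apply a circle inversion~$\iota$ centered at~$p$ to linearize the two arcs, and then exploit the trivial geometry of a pair of crossing lines. Because the underlying circles of~$a_1$ and~$a_2$ both pass through~$p$, their images under~$\iota$ are two straight lines~$\ell_1$ and~$\ell_2$. By the conformality of~$\iota$, the directions of~$\ell_1$ and~$\ell_2$ are those of the tangents to~$a_1$ and~$a_2$ at~$p$; since~$p$ is a proper (i.e., transversal) intersection, these tangents are distinct, so~$\ell_1$ and~$\ell_2$ are non-parallel and meet at a unique finite point~$q$.

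I would then take~$c'$ to be a circle centered at~$q$ with radius~$\rho > |q-p|$. Any circle centered at~$q$ is orthogonal to both~$\ell_1$ and~$\ell_2$, because a line is orthogonal to a circle exactly when it passes through the circle's center. The choice~$\rho > |q-p|$ guarantees that~$p$ lies strictly inside~$c'$; in particular~$c'$ avoids~$p$, so the preimage~$c \coloneqq \iota^{-1}(c')$ is a genuine circle (rather than a line). Conformality of~$\iota$ then yields that~$c$ is orthogonal to both~$a_1$ and~$a_2$.

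It remains to check that~$c$ encloses~$p$ and can be made arbitrarily small. Both are routine circle-inversion bookkeeping. For the enclosure, every straight line through~$p$ is fixed setwise by~$\iota$, and its two intersection points with~$c'$ (which lie on opposite sides of~$p$ since~$p$ is interior to~$c'$) map to two points of~$c$ on the same two sides of~$p$; hence~$p$ is interior to~$c$ as well. For the size, the standard formula for the image of a circle under inversion shows that the radius of~$c$ scales like~$1/\rho$ as~$\rho \to \infty$, so~$c$ can be chosen as small as desired. I do not expect a substantial obstacle here; the only care point is the region-tracking in the last step, which the line-through-$p$ argument handles cleanly, avoiding the standard confusion that inversion swaps the two regions of~$c'$ as sets even though it preserves which of them contains~$p$.
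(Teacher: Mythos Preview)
Your argument is correct and genuinely different from the paper's. The paper proceeds by direct computation: fixing a target radius~$r$ for~$c$, it observes that orthogonality to~$c_i$ forces the center of~$c$ to lie on an auxiliary circle~$c_i'$ concentric with~$c_i$ of radius~$\sqrt{r_i^2+r^2}$, and then checks via the two inequalities~$d<d_1+d_2$ and~$d>d_1-d_2$ that~$c_1'$ and~$c_2'$ intersect for all sufficiently small~$r$. Your route instead linearizes the picture by inverting at~$p$, reducing to the trivial fact that every circle centered at the crossing point of two lines is orthogonal to both, and then pulls back.

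What each approach buys: the paper's computation is self-contained and yields, for every small~$r$, exactly two candidate centers, which matches how the circle is actually used (one enclosing each of the two intersections of~$c_1$ and~$c_2$). Your inversion argument is shorter and more conceptual, and it fits naturally with the rest of the paper, which already invokes circle inversions and their conformality. One small remark: your appeal to ``conformality'' to read off the directions of~$\ell_1,\ell_2$ is slightly informal, since~$\iota$ is singular at~$p$; a cleaner phrasing is that the tangent line to~$c_i$ at~$p$ is a line through~$p$, hence fixed by~$\iota$, and tangency forces it to meet~$\ell_i$ only at~$\infty$, i.e., to be parallel to~$\ell_i$. Alternatively, the second intersection~$p'$ of~$c_1$ and~$c_2$ (which exists precisely because the crossing at~$p$ is transversal) maps to~$q=\ell_1\cap\ell_2$, giving~$q$ directly. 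Either way the conclusion stands.
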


\begin{proof}
    For~$i \in \{1,2\}$ let~$c_i$ be the underlying circle of~$a_i$ with center~$(x_i, y_i)$ and radius~$r_i$.
    Without loss of generality, we assume that~$r_1 \geq r_2$.
    We denote by~$d$ the Euclidean distance between the centers of~$c_1$ and~$c_2$.
    By the assumption that~$a_1$ and~$a_2$ have a proper intersection, $c_1$ and~$c_2$ must have two intersections, which is the case if and only if~$d < r_1 + r_2$ and~$d > r_1 - r_2$.

    We shall find a circle~$c$ orthogonal to~$c_1$ and~$c_2$ with a tiny but fixed radius~$r$ (whose exact value is to be determined later).
    Circle~$c$ is orthogonal to~$c_i$ if and only if~$r_i^2 + r^2 = d_i^2$, where~$d_i$ denotes the distance between the center of~$c_i$ and~$c$~\cite{Schwerdtfeger1979_CircleGeometry}.
    Thus, the center of~$c$ must be at distance~$d_i = \sqrt{r_i^2 + r^2}$ from the center of~$c_i$.
    In particular, for both~$i \in \{1,2\}$ the center of~$c$ must be on the circle~$c_i'$ with center~$(x_i,y_i)$ and radius~$d_i$.

    It remains to choose~$r$ such that~$c_1'$ and~$c_2'$ have non-empty intersection.
    This is the case if and only if~$d \leq d_1 + d_2$ and~$d \geq \abs{d_1 - d_2} = d_1 - d_2$ (where the last step follows from~$r_1 \geq r_2$).
    The first inequality holds for any choice of~$r > 0$ because $d < r_1 + r_2 < d_1 + d_2$ (here the first inequality follows from~$c_1$ having two intersections with~$c_2$).
    For the second inequality we know that~$d > r_1 - r_2$ (again, because~$c_1$ and~$c_2$ intersect twice), so there is an~$\eps > 0$ such that~$d = r_1 - r_2 + \eps$.
    Any~$r > 0$ such that~$\abs{d_i - r_i} < \eps/2$ works for us:
    Then
    \[
        d
        = r_1 - r_2 + \eps
        > \left(d_1 - \frac{\eps}{2}\right) - \left(d_2 + \frac{\eps}{2}\right) + \eps
        = d_1 - d_2
    \]
    as desired.
    Thus, for sufficiently small~$r > 0$, there is exactly two possible centers for circles that are orthogonal to~$c_1$ and~$c_2$ and enclose one of their intersections each.
    Choose the one corresponding to the intersection between~$a_1$ and~$a_2$.
\end{proof}

\section{\ER-Membership}

\begin{lemma}
    \label{lem:membership}
    Given a graph~$G$ and a rotation system~$\calR$, deciding whether~$G$ admits a Lombardi drawing respecting~$\calR$ is in~\ER.
\end{lemma}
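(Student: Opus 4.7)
My plan is to give a polynomial-time many-one reduction from the Lombardi drawing problem to \ETR\ by writing down an existential formula over the reals that is satisfiable if and only if $G$ admits a Lombardi drawing respecting~$\calR$. I would introduce the following real variables: coordinates $(x_v, y_v)$ for every vertex $v$; for every edge $e$, four coefficients $(A_e, B_e, C_e, D_e)$ encoding the generalized circle $A_e(x^2+y^2) + B_e x + C_e y + D_e = 0$, so that line segments (the case $A_e = 0$) and proper circular arcs are captured uniformly; and, for every incidence of a vertex $v$ with an edge $e$, a unit tangent vector $(c_{v,e}, s_{v,e})$ giving the direction of $e$ at~$v$.

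The constraints split into four groups. \emph{Non-degeneracy}: $B_e^2 + C_e^2 - 4 A_e D_e > 0$ for every edge, and $(x_u - x_v)^2 + (y_u - y_v)^2 > 0$ for every pair of distinct vertices. \emph{Incidence}: for every edge $e = \{u,v\}$, both endpoint coordinates satisfy the defining equation of the generalized circle of $e$. \emph{Tangent correctness}: $c_{v,e}^2 + s_{v,e}^2 = 1$ together with $(c_{v,e}, s_{v,e}) \cdot (2A_e x_v + B_e,\, 2A_e y_v + C_e) = 0$, forcing the tangent at $v$ to be a unit vector perpendicular to the gradient of the edge's defining polynomial at~$v$. \emph{Perfect angular resolution compatible with $\calR$}: at every vertex $v$ of degree $d$ with incident edges $e_1, \ldots, e_d$ in the cyclic order prescribed by $\calR$, the vector $(c_{v, e_{i+1}}, s_{v, e_{i+1}})$ equals $(c_{v, e_i}, s_{v, e_i})$ rotated counterclockwise by~$2\pi/d$ (indices modulo~$d$). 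The tangent-correctness group determines each tangent only up to sign, but the last group will then force a globally consistent ``outgoing'' choice of orientation at every vertex.

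The main hurdle I anticipate is encoding rotation by $2\pi/d$ inside \ETR, since coefficients must be integers while the rotation matrix needs the algebraic constants $\cos(2\pi/d)$ and $\sin(2\pi/d)$. I plan to handle this by introducing, once per degree $d$ occurring in $G$, auxiliary variables $(\alpha_k, \beta_k)$ for $k = 0, \ldots, d-1$ intended to equal $(\cos(2\pi k/d), \sin(2\pi k/d))$. These are pinned down by the equations $(\alpha_0, \beta_0) = (1, 0)$, $\alpha_k^2 + \beta_k^2 = 1$, $\alpha_{k+1} = \alpha_k \alpha_1 - \beta_k \beta_1$, $\beta_{k+1} = \alpha_k \beta_1 + \beta_k \alpha_1$, and a closure condition requiring $(\alpha_{d-1}, \beta_{d-1})$ to rotate back to $(1, 0)$ in one more step. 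To isolate the primitive $d$-th root of unity with smallest positive argument I add $\beta_1 > 0$ and $\alpha_1 \geq \alpha_k$ for all $k$. Since $d \leq |V(G)|$, this entire block is polynomial in size, after which the fourth group of constraints turns into two polynomial equations per consecutive pair of tangents.

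Altogether, the resulting formula has polynomially many variables and polynomially many polynomial (in)equalities, and is computable in polynomial time from $(G, \calR)$. A satisfying assignment is precisely the data of a Lombardi drawing of $G$ respecting $\calR$, and conversely every such drawing yields a satisfying assignment. This gives the required reduction to \ETR\ and hence membership in~$\ER$.
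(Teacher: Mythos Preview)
Your approach differs from the paper's, which does not build an \ETR\ formula directly but instead invokes the real-RAM verification framework of Erickson, van der Hoog and Miltzow: given real coordinates for the vertices and descriptions of the arcs, a real RAM checks in polynomial time that vertices are distinct, that no edge passes through a third vertex or overlaps another edge, that angular resolution is perfect, and that the rotation system is~$\calR$. A direct encoding like yours is a perfectly legitimate and more self-contained alternative, but it needs more care than you give it.

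The main gap is that your variables $(A_e,B_e,C_e,D_e)$ fix only the \emph{supporting} circle of an edge, not which of its two arcs is drawn, and nothing in your formula ties the tangent at~$u$ to the tangent at~$v$ for an edge $e=\{u,v\}$. Concretely, take the $4$-cycle with all four vertices and all four edges on a common circle: one can satisfy every one of your constraints while, for some edge, $(c_{u,e},s_{u,e})$ points along the short arc and $(c_{v,e},s_{v,e})$ along the long one. Such an assignment is not ``precisely the data of a Lombardi drawing''---whichever arc you actually draw, perfect angular resolution fails at one endpoint---so the claimed bijection between satisfying assignments and drawings is false. The fix is to add, per edge, a polynomial constraint forcing the two endpoint tangents to select the same arc (for instance, that $(c_{u,e},s_{u,e})$ and $(c_{v,e},s_{v,e})$ make equal signed angles with the chord direction on the appropriate sides). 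Smaller issues: you omit the non-degeneracy conditions that an edge may not pass through a foreign vertex and that distinct edges may not overlap, which the paper does check; your root-of-unity block fails for small~$d$ (for $d=2$ the condition $\beta_1>0$ is unsatisfiable, and $\alpha_1\ge\alpha_0$ never holds for $d\ge 2$, so you presumably mean $k\ge 1$); and hard-wiring the rotation at every vertex to be counterclockwise may be too restrictive, since a drawing can realise the cyclic order of~$\calR$ clockwise at some vertices and counterclockwise at others.
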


\begin{proof}
    By a result from Erickson, van der Hoog and Miltzow, \ER-membership follows from the existence of a polynomial-time verification algorithm for a real RAM machine\footnote{
        The real RAM extends the classical word RAM by additional registers that contain real numbers (with arbitrary precision).
        The basic arithmetic operations $+$, $-$, $\cdot$, $/$ and even~$\sqrt{\cdot}$ are supported in constant time.
        See~\cite{Erickson2022_SmoothingTheGap} for a formal definition.
    }~\cite{Erickson2022_SmoothingTheGap}.

    Let~$G$ with rotation system~$\calR$ be a yes-instance of the Lombardi drawing problem.
    Then the obvious witness is a Lombardi drawing~$\Gamma$ mapping each vertex to a point and each edge to a circular arc or line segment.
    Given real-valued coordinates for each vertex and a description of each edge as either a circular arc or a line segment, we have to check the following:
    \begin{itemize}
        \item No edge contains a vertex other than its two endpoints.
        \item No two edges share more than one point.
        \item No two vertices are mapped to the same point.
        \item Each vertex has perfect angular resolution.
        \item The rotation system of~$\Gamma$ is~$\calR$.
    \end{itemize}
    \ER-membership follows because all of the above checks can easily be done in polynomial time on a real RAM machine.
\end{proof}

\end{document}